\documentclass[letterpaper,12pt,notitlepage]{article}

\usepackage{amssymb}
\usepackage{amsfonts}
\usepackage{amsmath}
\usepackage{bbm}
\usepackage{ bbold }

\usepackage{setspace} 
\usepackage[margin=1.1in]{geometry}
\usepackage[utf8]{inputenc}
\usepackage{enumerate}
\usepackage{amsmath}
\usepackage{amsthm}
\usepackage{MnSymbol,wasysym}      

\usepackage{threeparttable}
\usepackage{graphicx} 
\usepackage[svgnames]{xcolor}

\usepackage{amsfonts}

\usepackage{verbatim}
\providecommand{\R}{\mathbb{R}}
\providecommand{\R}{\mathbb{R}}
\providecommand{\E}{\mathbb{E}}

\usepackage[round]{natbib}
\usepackage[hidelinks]{hyperref}
\hypersetup{
	colorlinks=false,
	linkcolor=blue,
	filecolor=magenta,      
	urlcolor=blue,
	citecolor=blue,
	pdftitle={Sharelatex Example},
	bookmarks=true
}

\begin{document}
	
	\title{\textbf{When to Request Evidence?}\thanks{ We have received excellent research assistance from Isaac Lara. A. Espitia gratefully acknowledges support from the Deutsche Forschungsgemeinschaft (German Research Foundation) through grant CRC TR 224 (Project B02). E. Mu\~noz-Rodriguez is grateful for funding from the Ford Foundation through a PREA program grant at El Colegio de Mexico, A.C. We thank Sarah Auster, Francesc Dilmé, Rosina Rodríguez-Olivera, James Schummer, as well as the participants at various seminars, for their insightful comments and suggestions. The views expressed herein are those of the authors and do not necessarily reflect those of the German Research Foundation or the Ford Foundation.}  }
	\author{Andrés Espitia,\thanks{Kellogg School of Management, Northwestern University,  \href{mailto:andres.espitia@kellogg.northwestern.edu}{andres.espitia@kellogg.northwestern.edu}.}   Edwin Mu\~{n}oz-Rodr\'{\i}guez\thanks{Center for Economic Studies, El Colegio de M\'exico A.C.,
			\href{mailto:eamunoz@colmex.mx}{eamunoz@colmex.mx}.}}
	\date{}
	\maketitle
	\newtheorem{definition}{Definition}
	\newtheorem{assumption}{Assumption}
	\newtheorem{lemma}{Lemma}
	\newtheorem{claim}{Claim}
	\newtheorem{corollary}{Corollary}
	\newtheorem{proposition}{Proposition}
	\newtheorem{theorem}{Theorem}
	\newtheorem{remark}{Remark}
	\newtheorem{result}{Result}
	\newtheorem{example}{Example}
	
	\begin{abstract}
		\singlespacing
	Appropriate decisions depend on information gathered beforehand, yet such information is often obtained through intermediaries with biased preferences. Motivated by settings such as testing and recertification in organ transplantation, we study the problem faced by a decision-maker who can only access costly information through an agent with misaligned preferences. In a dynamic framework with exogenous decision timing, we ask how requests for verifiable information (evidence) should be scheduled and their implications for the quality of attained choices. When the agent's incentives are ignored, evidence requests do not condition on previously reported information. However, such policies may be susceptible to strategic manipulation by the agent. We show that, in these cases, optimal requests should be biased: additional evidence is more likely to be sought when previous reports favor the agent’s preferred outcome. 
	\end{abstract}
	
	JEL Codes: D82, D83, D86
	
	Keywords: principal-agent, dynamic mechanism design without transfers, dynamic information acquisition   
	
	\newpage

	\section{Introduction}
	
	\label{sec:introduction}
	
	Decision-makers in many settings rely on information generated by others to make high-stakes choices. When the underlying conditions evolve over time,  agents are typically required to update information at regular intervals to ensure that decisions are based on contemporaneous data.
	For instance, in the United States, the Securities and Exchange Commission (SEC) requires publicly traded companies to disclose quarterly (10-Q) and annual (10-K) earnings \citep{bushee2005economic}. Similarly, organizations often rely on periodic performance evaluations to make decisions. Finally, in the United States, transplant centers are required to regularly update health information for patients awaiting liver, heart, or lung transplantation so that authorities allocate exogenously arriving organs among many patients using up-to-date patient data. The last example is the primary motivation for the analysis that follows and it is further discussed in \autoref{sec:facts}.

	If acquiring information were costless, the decision-maker would likely request updates as frequently as possible. However, gathering information tends to be costly for both parties.\footnote{In the case of organ transplantation, updating patient health status requires them to travel to testing facilities, the tests themselves are costly, and their results must be captured into the centralized system, which creates administrative burden on transplant centers \citep{roberts2003cost}. For firms, producing reports that meet the regulators' standards is costly, and processing them is costly for the regulators themselves.}
	Therefore, planners designing schedules for updating information face a trade-off between the quality of information gathered and its costs, which go beyond the direct cost of acquiring information, and include potential distortions in the decisions themselves.

	Update schedules might also include voluntary disclosures. In practice, transplantation authorities allow transplant centers to update patient health status at any time between scheduled updates, and the SEC permits companies to make voluntary disclosures between mandatory filings. This flexibility can help bridge information gaps. However, whenever agents' objectives diverge from those of the decision-maker, discretion over whether and when to disclose information creates the scope for strategic timing of information transmission. A well-established empirical literature documents that firms strategically choose the timing of their information disclosures \citep{healy2001information, kothari2009managers, ge2011acquirers}.
	A similar pattern appears in deceased-donor liver transplantation, where priority-increasing clinical changes are reported promptly and priority-decreasing ones later (see \autoref{sec:facts}).

	We study how a planner should design an information-update schedule and an allocation policy when an agent may strategically time both the acquisition and the disclosure of verifiable information. To this end, we consider a two-period setting in which a principal commits to a deterministic mechanism without transfers. The mechanism consists of a recommendation about when the agent should acquire evidence (the \textit{testing policy}) and a rule mapping the evidence the agent chooses to disclose into a final decision at an exogenous, deterministic decision time (the \textit{assignment policy}). The agent has state-independent preferences over a binary outcome and, in each period, decides whether to acquire and disclose evidence that fully reveals the state in that period. Evidence may also arrive exogenously. For example, in transplantation, a patient may experience a health event that generates unsolicited clinical information as a by-product. 

	Our analysis proceeds in three steps. First, as a \textit{baseline}, we characterize the mechanism that would be chosen by the principal in the absence of agency frictions.
	This mechanism is \textit{result-independent} in that its testing recommendations do not depend on the history of reported test results, just on whether or not evidence was provided when requested. Likewise, the baseline (\textit{efficient}) assignment policy only conditions on payoff-relevant information (e.g., on the result of the last test submitted).

	Second, we analyze the agent’s incentives under the baseline mechanism. 
	When testing is not recommended and testing costs are low, the agent may acquire evidence nonetheless and disclose only favorable results. Given that the principal cannot distinguish a lack of evidence from withheld evidence in periods when testing was not requested, even small gaps in how assignments depend on reported results can induce such deviations, leading to \textit{ex-post} inefficient assignments and overtesting.

	Third, we characterize the optimal mechanism, taking into account the agent’s incentives to acquire and withhold information, and how these incentives are shaped by the assignment policy. For the region of the parameter space in which a deviation from the baseline testing recommendations is profitable for the agent, we show that the scope the principal has to condition the assignments on the report in histories after which testing is not requested depends critically on the agent’s cost of acquiring evidence.\footnote{When the agent faces no costs of evidence acquisition, this scope completely closes: requesting no test in a given period requires not conditioning the assignment on the report submitted that period.}
	Therefore, optimal testing policies tend to display bias and become \textit{result-dependent}. In particular, it requests evidence acquisition when the previous information supports the agent's preferred choice.
	
Moreover, when testing costs are low for both parties, the baseline testing and assignment policies are compatible with the agent’s incentives to acquire and disclose information.\footnote{ We also provide a tight lower bound in terms of the amount of mandatory information acquisition needed to attain full disclosure while preserving the baseline assignment policy.}  
Finally, when information acquisition is ``expensive" for both players, the optimal testing policy coincides with the baseline by never recommending acquiring evidence 
but also requires distortions in the assignment policy relative to the (\textit{efficient}) baseline one.

	All things considered, our main contribution is to establish that \textit{result-independent policies} allow for opportunistic behavior from the agent (in the form of acquiring evidence when it was not requested and censoring the reports according to his preferences). Thus, optimal policies must take such behavior into account. This leads to the optimality of \textit{result-dependent} policies that condition testing recommendations and/or assignments on previous test results (even when such information is payoff irrelevant). 
	
	Furthermore, we present an extension of the model with a continuum of agents. This analysis provides a foundation for our focus on the single-agent case.
		
	\subsection{Related literature} 
	
	This paper relates to four  (broadly defined) branches of the literature: dynamic information acquisition, dynamic allocation without money, dynamic moral hazard, and (more generally) mechanism design problems without money and verifiable information.
	
	Regarding \textit{centralized} dynamic information acquisition,  \cite{zhong2022optimal} studies the problem with endogenous timing of choice (involving an optimal stopping time). Complementarily, \cite{georgiadis2024information} focuses on an exogenous timing of choice. They analyze very general information acquisition technologies. In contrast, we focus on \textit{decentralized} information acquisition with an agent with different preferences than the decision-maker's. Given that the decision-maker in our framework relies on a second player for information acquisition, we assume that information is verifiable. Furthermore, we simplify the information acquisition technology by focusing on the simple case of either learning the current state with certainty or learning nothing.  
	
	There are several instances of decentralized information acquisition with biased agents. In a continuous time setting, \cite{bardhi2024attributes} studies an agent who decides which dimensions (or attributes) of a complex and invariant state to learn about. Information acquisition costs are not considered, focusing on how the agent optimally chooses which attributes to learn about. In contrast, \cite{ball2023should} consider how the agent's choices affect the arrival rate of information (in the form of \textit{breakthrough} or a \textit{breakdown}), which is directly observed by the decision-maker. Therefore, information elicitation is not an issue in their setting. Moreover, the decision-maker has full control of when to end the interaction with the agent.

	Closer to our approach are \cite{li2023dynamics} and \cite{liimplementing}. On one hand, similar to the current setting, \cite{li2023dynamics} consider an agent acquiring information over time about a changing state, biased towards a particular choice by the decision-maker, and where transfers are unavailable. Information is assumed to be verifiable, provided the decision-maker pays a cost. Crucially, the decision-maker faces an optimal stopping problem, which is absent in our setting. On the other hand, \cite{liimplementing} consider non-verifiable information and allows the decision-maker to commit to (bounded) state-dependent transfers.

	Moreover, dynamic allocation problems without transfers have been analyzed in several instances. We proceed to discuss a strict subset of them. First, \cite{guo2020dynamic} and  \cite{chen2022dynamic} consider environments with evolving states and an agent with biased preferences.  However, the agent is always privately informed about the ``state" (which can be interpreted as his private type). Hence, the information acquisition problem is inactive.   \cite{chen2023ask} allow the agent to conceal (but not misrepresent) his private information (which is also free in their setting). They compare two reporting protocols: \textit{frequent updating} (which asks for a report in each period) and \textit{infrequent updating} (which only requires a report at the end of the learning process). 
	
	In contrast,  \cite{ashlagi2024optimal} focus on the case where the decision-maker can design the information disclosed to a mass of agents. Moreover, the agent's private information is non-verifiable, requiring incentives for the agent's truthful information revelation. Finally, \cite{guo2016dynamic} studies delegated experimentation to an agent who decides how to allocate resources between a safe and a risky project. Similar to our approach, transfers are ruled out. However, decisions are reversible and adjustable. The main result establishes the optimality of a cutoff rule (in the belief space).

	Moral hazard in a dynamic setting is studied, among others, by \cite{sannikov2008continuous}, \cite{biais2010large}, and \cite{cvitanic2017moral} under the assumption of full commitment on the side of the decision-maker. The case without commitment is studied by \cite{horner2016dynamic}, while \cite{ma1991adverse} studies frictions related to renegotiation. We depart from their analysis by ruling out contingent payments and assuming that the decision-maker does not directly observe the output, capturing cases in which progress on a project is privately observed by the agent.

	Models of mechanism design without money and verifiable information include  \cite{deneckere2008mechanism}, \cite{ben2014optimal},   \cite{rosar2017test}, \cite{libgober2022false},  and \cite{ben2023sequential}. However, the underlying source of uncertainty does not evolve in these models. We contribute to this literature by considering the question of the optimal \textit{timing} of evidence acquisition and disclosure.

	Turning to our leading application, a growing literature in economics studies informational frictions in deceased-donor organ allocation. Closest to our work are theoretical papers that analyze the strategic misrepresentation of patients’ medical needs to gain priority for organs, either in environments where misrepresentation is costly \citep{munoz2024ethical, munoz2024rationing, perez2025falsification} or in settings of cheap talk \citep{ashlagi2024optimal}. On the empirical side, beyond \citet{gordon2021chapter3}, discussed in detail in \autoref{sec:facts}, a number of papers study how uncertainty about future organ quality shapes transplant acceptance decisions, and analyze the resulting equilibrium implications for assignments, treatment choices, and learning \citep{agarwal2025choices, agarwal2020equilibrium, agarwal2018dynamic, doval2024social, howard2002transplant, zhang2010sound}. \citet{snyder2010gaming} is the first paper in economics to provide empirical evidence of gaming of treatments in liver transplantation. Other empirical studies of deceased-donor organ allocation include \citet{dickert2019allocating}, \citet{genie2020role}, and \citet{callison2025externalities}.

Finally, the unknown binary outcome in our model can be interpreted as indicating whether the agent has high or low priority for assignment. This connects our paper with the literature on designing priorities through waiting lists, which includes \cite{su2004patient, bloch2017dynamic, schummer2021influencing, leshno2022dynamic, arnosti2020design, thakral2016publichousing, murraanton2024publichousing} and \cite{ margaria2025queueing}.

	\section{Motivating facts}
	\label{sec:facts}
	
In deceased-donor organ transplantation, the timing of allocation decision is exogenous: procured organs arrive stochastically, outside the control of transplantation authorities or transplant candidates. 
Moreover, these organs must be allocated quickly to prevent their deterioration \citep{cesaretti2024cold, debout2015each, kettlewell2023minutesmatter}. 

Because up-to-date information on transplant candidates’ health status is critical for ensuring an efficient and equitable allocation of deceased-donor organs, the U.S. transplantation authorities have established formal update schedules whenever the allocation depends on time-varying covariates. In particular, in the U.S. liver transplantation system, when an organ from a deceased donor is procured, transplant candidates are prioritized through rankings which are a function of how severely ill patients are \citep{polyak2021evolution}. Since 2002, for most transplant candidates the severity of illness has been determined by the Model for End-Stage Liver Disease (MELD) score, which depends on a set of laboratory results. A higher MELD score indicates a sicker patient.\footnote{Currently, there are three additional factors that influence the patient priority for a liver: geography, blood compatibility and time spent waiting for a liver. See \cite{us2020organ} for a review of the current liver transplant allocation policy in the United States.} 

MELD scores within a patient change over time either downward or upward, reflecting temporary improvements or deteriorations in a candidate’s clinical condition \citep{gordon2021chapter3}. Therefore, allocation policies require transplant centers to periodically recertify each patient’s score. 
The corresponding recertification schedule depends on severity: for the most critically ill group of patients, MELD must be updated every seven days, whereas for less severe groups the required intervals are one month, three months, or one year.\footnote{OPTN Allocation Policies, Section 9.2. Available at \url{https://optn.transplant.hrsa.gov/media/eavh5bf3/optn_policies.pdf}.}

In addition to these mandatory recertifications, voluntary updates of MELD scores are permitted within each group’s update window. This degree of discretion at the transplant center/patient level creates the key friction that our stylized model formalizes and that has been extensively documented by \cite{gordon2021chapter3}. Using administrative records from the U.S. transplantation authority, \citet{gordon2021chapter3} document clear patterns of strategic timing behavior in MELD recertification data: changes in MELD scores that increase a patient’s priority for a deceased-donor liver are reported promptly, whereas changes that lower a patient’s priority are systematically delayed until the end of the permissible reporting window.

In the remainder of the paper, we first formalize the channel highlighted in these stylized facts with a streamlined model. We then use the model to examine how the optimal information‐update requirements should be designed when such discretion over voluntary reporting is available. As discussed in \autoref{sec:introduction}, this discretion may be desirable from the principal’s perspective, as it can help bridge information gaps. Hence, it is not immediate that the principal would find it optimal to fully eliminate it. 

	\section{Model}
	\label{model}
	
In this section, we introduce a two-player, two-period, binary-state model. Our analysis focuses on the interplay between the cost of testing, voluntary reporting, and the design of information-update schedules. To keep a parsimonious framework, we deliberately abstract from several relevant features discussed in our motivating application, such as the stochastic arrival of organs, geographical considerations, and tissue and blood compatibility, among others.

\paragraph{Preliminaries}	Two players, a principal ($P$, she) and an agent ($A$, he), interact over two periods, $t\in\left\{1,2\right\}$. At the end of the second period, $P$ must choose an action $x\in\{0,1\}$, which we will interchangeably refer to as an \textit{assignment} or an \textit{allocation}.
		
In every period $t$, there is a binary state of the world $\omega_t\in\Omega:=\left\{0,1\right\}$ which evolves stochastically with some degree of persistency.  Specifically, the transition probabilities are given by $Pr\left[\tilde{\omega}_{t+1}=\omega|\omega_{t}=\omega\right]=\rho$ with $\rho\in(1/2,1)$.\footnote{We use tildes to denote that a variable is random. For instance, $\tilde{\omega}_t$ represents the random state at $t$ and $\omega_t$ its realization.}  The common prior belief is given by $\mu_0:=Pr[\tilde{\omega}_1=1]\in(0,1)$.
	
From $P$'s perspective, the appropriate \textit{allocation} depends on the realized state of the world at $t=2$, $\omega_2$. In contrast, the states $\omega_1$ and $\omega_2$ are payoff-irrelevant for $A$. Neither $P$ nor $A$ can directly observe $\omega_t$ at any point in time.

 \paragraph{Information acquisition and transmission} While at any period $t$ the state of the world is hidden from both players, each period $A$ can take a \textit{test} which fully reveals $\omega_t$.  We denote $A$'s testing choice in $t$ by $e_t\in\{0,1\}$. Testing in $t$ generates a private cost $c\in\R_{++}$ to $A$ and $k\in\R_{++}$ to $P$. 
	 
	Additionally, if $A$ decides not to acquire information, i.e. $e_t=0$, he may still learn the state of the world with probability $\pi\in(0,1)$. 
	
Furthermore, once $A$ observes a test result, he decides whether or not to report it. Let $r_t\in\left\{\O,\omega_t\right\}$ denote the report that the agent sends to the principal, where the null report $r_t=\O$ indicates that it is not a test result. Hence, we are assuming the non-falsifiability of information.
		
\paragraph{Timing}
	In each period $t$, first, nature  selects $\omega_t$. Afterwards,  $A$ decides whether to take the test, $e_t$.  If the test is taken, the agent observes $\omega_t$ with probability 1.  If the test is not taken, the agent observes $\omega_t$ with probability $\pi$.  Next, $A$ sends a report $r_t$ to $P$ who in turn chooses $x$ if $t=2$.

\paragraph{\textit{Ex-post} payoffs}	For an assignment $x$ and testing decisions $e=(e_1,e_2)$, $P$ receives $\mathbb{1}[x=\omega_2]-k(e_1+e_2)$, while $A$ receives $x-c(e_1+e_2)$, where $\mathbb{1}[\cdot]$ denotes the indicator function. Importantly, testing costs are incurred only at the end of the interaction, which prevents the principal from making inferences about the agent’s testing decisions at any interim stage.

	\subsection{Mechanisms}
	
	The principal is assumed to have commitment power but monetary transfers are ruled out. We restrict attention to deterministic mechanisms.

	A \textit{public history} $h_t:=\left(\ldots,r_{t}\right)$ keeps track of the reports submitted by the agent at the end of that period. Let $\mathcal{H}_t:=\left\{\O,0,1\right\}^t$ denote the set of all possible histories at the end of period $t$. 
	Before the beginning of the interaction $P$ sets up a mechanism $m=\left(\sigma,\hat{x}\right)$ consisting of a \textit{testing policy} $\sigma$ and an \textit{assignment policy} $\hat{x}$.
	
	A \textit{testing policy} is a vector $\sigma=(\sigma_1,\sigma_2)$ where $\sigma_1\in\{0,1\}$ and $\sigma_2:\mathcal{H}_1\to\left\{0,1\right\}$  recommending whether to submit a test result in a given period, as a function of the public history. Thus, a testing policy $\sigma$ is a four-dimensional binary vector.
	Examples of feasible testing policies include: never require a test result, requiring a test in both periods, requiring a test only in $t=2$ regardless of the history, or requiring a test in $t=2$ if and only if a given result is reported in $t=1$.	
	
	An \textit{assignment policy} is a vector  $\hat{x}:\mathcal{H}_2\to \{0,1\}$ prescribing an action as a function of the public history at the end of $t=2$.  Examples of feasible assignment policies include: basing the allocation solely on $P$'s beliefs about the state of the world at the time of the decision, punishing $A$ with his least preferred choice whenever he does not comply with the testing recommendation, condition the assignment only on the last submitted test result, committing to a particular allocation independently of the realized history.
	We denote by $\hat{x}(r_1,r_2)$ the assignment prescribed by the policy after reports $(r_1,r_2)$. Since, there are nine possible histories at the end of $t=2$, $\hat{x}$ is a vector in $\{0,1\}^9$.

	Let $\mathcal{M}$ be the set of such mechanisms, which has, in principle, $2^4\cdot 2^9=8192$ elements. Thus, as with any finite optimization problem, a solution always exists. As an alternative to comparing $P$'s expected payoffs under each alternative, we will progressively limit the search of the optimal mechanism to a more manageable subset of $\mathcal{M}$. Whenever $A$ is indifferent between two actions, we adopt the tie-breaking rule that he follows the test recommendation and discloses the test result
	
	\section{Baseline mechanism}

	In this section, we characterize the baseline mechanism that would emerge in the absence of agency frictions. We then analyze $A$’s behavior under this benchmark, finding a region of testing costs where $A$ can strictly improve by strategically timing reports. Finally, we illustrate how the mechanism can be improved once $A$’s incentives are explicitly taken into account.

	Suppose $P$ makes all decisions and directly observes test results, fully ignoring $A$'s preferences. The problem can be formulated as 
	\begin{align*}
		\max_{(\sigma,\hat{x})\in\mathcal{M}}\ Pr_{\sigma}[\hat{x}(\tilde{r}_1,\tilde{r}_2)=\tilde{\omega}_2]-k\cdot (\sigma_1+\E_{\sigma_1}[\sigma_2(\tilde{r}_1)]),
	\end{align*}
	where 
	\[\E_{\sigma_1}[\sigma_2(\tilde{r}_1)]:=[\sigma_1+(1-\sigma_1)\pi][\mu_0 \sigma_2(1)+(1-\mu_0)\sigma_2(0)]+(1-\sigma_1)(1-\pi)\sigma_2(\O)\]
	 is the probability of testing in $t=2$, given the testing recommendation for $t=1$.	

	Let $\mu_t(h_t):=\E[\tilde{\omega}_t|h_t]$ be the belief about $\tilde{\omega}_t$ at the \textit{end} of $t$ and \[\mu_t(h_{t-1}):=\E[\tilde{\omega}_t|h_{t-1}]=\rho\mu_{t-1}(h_{t-1})+(1-\rho)(1-\mu_{t-1}(h_{t-1}))\]
	 be the belief about $\tilde{\omega}_t$ at the \textit{beginning} of $t$.  Naturally, at $t=1$, $\mu_t(h_{t-1})=\mu_0$. Additionally, \[\mu_2(\O)=\rho\mu_0+(1-\rho)(1-\mu_0).\]
	
	We proceed with the analysis by backward induction. Given beliefs $\mu_2(h_2)$, $P$ chooses $x=1$ if $\mu_2(h_2)> 1/2$, $x=0$ if $\mu_2(h_2)<1/2$, and she is indifferent between the two actions when $\mu_2(h_2)=1/2$.  
	
	\begin{definition}[Efficient assignments]
			An assignment policy is \textit{efficient} if, for all histories $h_2$ that occur with strictly positive probability, it prescribes $P$'s static optimal decision given her beliefs derived from public reports at the time of choice; that is,$
		\hat{x}^*(h_2):=\mathbb{1}[\mu_2(h_2)\geq 1/2].
		$
	\end{definition}

	In the previous definition, we emphasize that $\mu_2(h_2)$ is $P$'s belief based solely on the history of public reports. 
	In the baseline case, this coincides with the belief formed from all information produced in the economy. In later sections, however, $A$'s strategic behavior may generate a gap between public reports and privately gathered information. When this gap is not accommodated, decisions can be inefficient from an \textit{ex-post} perspective, despite appearing efficient when only public reports are considered.

	The next proposition states that the baseline mechanism pairs efficient assignments with an intuitive testing policy. Never test in the first period, and testing in the second period depending
	 on $P$'s \textit{effective cost} of testing, $\kappa:=k/(1-\pi)$. When this cost is small, always testing in the second period (when the information is more valuable) is preferred, while if the cost is large enough, never testing is preferred. For intermediate costs, however, $P$ prefers to test in $t=2$ only if she did not get any informative signal in $t=1$.

	\begin{proposition}[Baseline mechanism]
		\label{prop:baseline}
		The baseline mechanism pairs the efficient assignments $\hat{x}^*$ with the following testing policy:	$\sigma_1^*=0$ and 
		\begin{itemize}
			\item If $\kappa\in(0, 1-\rho]$:   $\sigma_2^*(r_1)=1$ for all $r_1$. 
			\item  If $\kappa\in(1-\rho,\min\{\mu_2(\O),1-\mu_2(\O)\}]$:	$\sigma_2^*(r_1)=\mathbb{1}[r_1=\O]$. 
			
			\item If $\kappa>\min\{\mu_2(\O),1-\mu_2(\O)\}$: $\sigma_2^*(r_1)=0$ for all $r_1$.  
		\end{itemize}
		Moreover, if $\kappa= 0$, then always testing is optimal.
	\end{proposition}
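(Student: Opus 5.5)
The plan is backward induction in three steps. First I fix the assignment given beliefs. Then I show that first-period testing is dominated. Finally I choose $\sigma_2$ history by history.

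\emph{Assignments.} For any testing policy, the principal's only payoff that depends on $\hat{x}$ is $Pr[\hat{x}(\tilde h_2)=\tilde\omega_2]$. Conditional on each positive-probability history $h_2$, this probability is maximized by $\mathbb{1}[\mu_2(h_2)\geq 1/2]$, with ties at $1/2$ irrelevant. So $\hat{x}^*$ is optimal against every $\sigma$. The expected accuracy after a history is then $\max\{\mu_2(h_2),1-\mu_2(h_2)\}$, which equals $1$ whenever $\omega_2$ is observed.

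\emph{Testing in $t=1$.} Take any $\sigma$ with $\sigma_1=1$. Its expected cost is at least $k$, and its accuracy is at most $1$. Compare it with the policy $\sigma_1=0$, $\sigma_2\equiv 1$. That policy tests exactly once, at cost $k$, and reveals $\omega_2$ for sure, so its accuracy is $1$. It therefore weakly dominates every policy with $\sigma_1=1$. The domination is strict unless $\sigma_2\equiv 0$ after the first test, and in that case accuracy is only $\rho<1$, so it is strict there too. Hence $\sigma_1^*=0$.

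\emph{Choosing $\sigma_2$ given $\sigma_1=0$.} With probability $\pi$ the first-period state is revealed, so $r_1=\omega_1$ and the interim belief is $\mu_2(r_1)\in\{\rho,1-\rho\}$. With probability $1-\pi$ we get $r_1=\O$ and the interim belief is $\mu_2(\O)$. The objective is additively separable across $r_1$, so $\sigma_2(r_1)$ can be chosen pointwise.

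The key observation is that without a test, $\omega_2$ is still revealed exogenously with probability $\pi$. A test therefore changes accuracy only on the event of probability $1-\pi$ where nothing arrives. On that event it raises accuracy from $\max\{\mu_2(r_1),1-\mu_2(r_1)\}$ to $1$. The gain from testing is thus
\[
(1-\pi)\min\{\mu_2(r_1),1-\mu_2(r_1)\},
\]
against a cost of $k$. So testing is optimal iff $\kappa\leq \min\{\mu_2(r_1),1-\mu_2(r_1)\}$, using the tie-break toward testing.

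For $r_1\in\{0,1\}$ this threshold is $1-\rho$. For $r_1=\O$ it is $\min\{\mu_2(\O),1-\mu_2(\O)\}$. Since $\mu_2(\O)=\rho\mu_0+(1-\rho)(1-\mu_0)$ is a strict convex combination of $\rho$ and $1-\rho$, and $\mu_0\in(0,1)$, we get
\[
1-\rho<\min\{\mu_2(\O),1-\mu_2(\O)\}.
\]
This ordering yields exactly the three cost regions in the statement. The case $\kappa=0$ is immediate: testing is then free relative to its informational value, so always testing is optimal.

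\emph{Main obstacle.} The only delicate step is checking that the $\sigma_1=1$ domination covers every $\sigma_2$ on every history. I would also confirm that the exogenous-arrival probability in period $1$ does not alter the comparison. It does not, because the dominating policy attains perfect accuracy at cost exactly $k$.
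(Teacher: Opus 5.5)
Your proof is correct and follows the paper's backward-induction argument: the same pointwise test threshold $\kappa\leq\min\{\mu_2(r_1),1-\mu_2(r_1)\}$, the same ordering of thresholds after informative versus empty first-period reports, and a domination argument for $\sigma_1^*=0$ that fills in a step the paper only asserts. One small slip there: with $\sigma_1=1$ and $\sigma_2\equiv 0$ the accuracy is $\pi+(1-\pi)\rho$, not $\rho$, but this is still strictly below $1$, so the strict domination is unaffected.
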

	
	Note that whether or not a test is recommended in $t=2$ does not depend on $\omega_1$. The recommendation can depend on whether or not $r_1=\O$, but it is independent of the result itself. Further note that the efficient assignments do not condition on any payoff irrelevant information (i.e., if $r_2\neq\O$, then $r_1$ plays no role). Therefore, we say that this baseline mechanism is \textit{result-independent}.

	\subsection{Punishments}

		Whenever $A$ deviates from a recommendation to test after a given history, $P$ can punish $A$ with an allocation of zero. Specifically, for the baseline mechanism, $P$ can set $\hat{x}_2(\O,\O)=0$. 
	
	In order to formalize the previous idea, consider any mechanism $m=(\sigma,\hat{x})\in\mathcal{M}$. Let $e^m_t(h_t)\in\{0,1\}$ denote the agent's optimal testing decision in $t$ when the history at the beginning of the period is $h_t$. Similarly, let $r^m_t(h_t,e_t,\omega_t)\in\{\omega_t,\O\}$ denote the agent's optimal disclosure decision when the history is given by $h_t$, the agent testing decision was $e_t$, and the test result is $\omega_t$. If no test result is observed, $A$ has no disclosure decision to make. Naturally, for $t=1$ we can write $e^m_1$ and $r^m_1(e_1,\omega_1)$ since there is no history at that point. 
	
	Hereafter, we assume that any detected deviation from the recommendation is punished with $A$'s least preferred assignment.  Specifically, if a test is requested at some history and no result is reported, then the principal assigns the agent his worst outcome at that history and sets all future test recommendations to zero. Moreover, because $\sigma_1=1$ is never optimal, prescribing punishments for deviating from $\sigma_1=1$ are unnecessary.

	\begin{assumption}[Forcing mechanisms]
		\label{assump:forcing}
		For any $r_1$ with $\sigma_2(r_1)=1$, $r^{m}_2(r_1,e^m_2(r_1),\omega_2)=\O$ implies  $\hat{x}(r_1,\O)=0$.
	\end{assumption}

		We will establish in \autoref{subsec:IC} that this assumption is without loss of generality. 
	\subsection{Agent's behavior under the baseline mechanism}
		
We now turn to $A$’s optimal behavior under the baseline mechanism described in \autoref{prop:baseline}. 
$A$’s compliance with the baseline testing policy is naturally determined by his own testing costs $c$. For clarity of exposition, we focus on cases in which both parties have intermediate testing costs.\footnote{Other possible combinations of testing costs are considered in Appendix C. The approach developed in \autoref{sec:optimalmech} applies to any combination of parameters.} Thus, the following assumption will be maintained throughout.

\begin{assumption}(Intermediate testing costs for $P$)
	\label{assump:interKappa}
\[\kappa\in(1-\rho,\min\{\mu_2(\O),1-\mu_2(\O)\}].\]
\end{assumption}

Under \autoref{assump:interKappa} the baseline mechanism recommends testing in $t=2$ if and only if no test result was reported in $t=1$, and no testing in $t=1$. By \autoref{assump:forcing}, $P$ sets $\hat{x}_2(\O,\O)=0$.

	However, under a recommendation not to test, $A$ can obtain evidence at no cost with positive probability ($\pi$) while complying with the recommendation. This information is not requested, yet it remains informative for $P$. Thus, $P$ faces a trade-off between exercising control and facilitating information acquisition, which opens the door for $A$ to engage in discretionary testing when not required and to exploit the opportunity to conceal unfavorable outcomes.\footnote{If we close this channel ($\pi=0$), $P$ can always get $A$ to follow the baseline mechanism, provided $\gamma\leq \mu_2(\O)$, fully disclosing all acquired information. See \autoref{lem:imppi0} in the Appendix.} Whether $A$ engages in discretionary testing or not depends on $A$'s testing costs, as the following lemma states.

	\begin{lemma}
		\label{lem:testObedt2}
		 
		 Let $\gamma:=c/(1-\pi)$ be $A$'s effective costs of testing. 
		 $A$ follows $\sigma_2^*(1)=\sigma_2^*(0)=0$ if and only if $\gamma>1-\rho$. Moreover:
		 
		\begin{itemize}
			\item If $\gamma\leq 1-\rho$, $A$ follows $\sigma_2^*(\O)=1$.
			\item If $\gamma \in (1-\rho,\mu_2(\O)]$, $A$ follows $\sigma_2^*(\O)=1$ if he disclosed all available information at $t=1$. 
			\item If $\gamma>\mu_2(\O)$, $A$ never follows $\sigma_2^*(\O)=1$. 
		\end{itemize}
	\end{lemma}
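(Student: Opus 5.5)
The plan is to solve the agent's problem by backward induction at $t=2$, taking the baseline mechanism of \autoref{prop:baseline} under \autoref{assump:interKappa} as given. Under that mechanism $\sigma_1^*=0$ and $\sigma_2^*(r_1)=\mathbb{1}[r_1=\O]$. The assignments are as follows. When $r_2\neq\O$, $\hat{x}^*(r_1,r_2)=r_2$. When $r_2=\O$, $\hat{x}^*(1,\O)=1$ and $\hat{x}^*(0,\O)=0$, because $\mu_2(1)=\rho>1/2$ and $\mu_2(0)=1-\rho<1/2$. Finally $\hat{x}^*(\O,\O)=0$ by \autoref{assump:forcing}. Since the agent's payoff is $x$ minus his costs, each comparison reduces to weighing his expected assignment with and without testing, given his \emph{private} belief about $\omega_2$.

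\textbf{Histories with $r_1\in\{0,1\}$ (no test requested).}
\begin{itemize}
\item After $r_1=1$, the agent secures $x=1$ without testing: he discloses $\omega_2=1$ if he learns it and conceals $\omega_2=0$, which yields $\hat{x}^*(1,\O)=1$. Testing therefore only adds the cost $c$, and he complies.
\item After $r_1=0$, the agent believes $\Pr[\omega_2=1]=1-\rho$. Without testing he obtains $x=1$ only if he learns $\omega_2=1$ for free, so his payoff is $\pi(1-\rho)$. With a test he obtains $(1-\rho)-c$. The gain from testing is $(1-\pi)(1-\rho)-c$, which is nonnegative iff $\gamma\le 1-\rho$. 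Complying with $\sigma_2^*(0)=0$ is thus equivalent to $\gamma>1-\rho$, which gives the first claim. The boundary case is to be read consistently with the paper's convention for indifference; I will check it against the stated strict inequality.
\end{itemize}

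\textbf{History $r_1=\O$ (test requested).} Because of the punishment $\hat{x}^*(\O,\O)=0$, the agent obtains $x=1$ only by reporting $\omega_2=1$. Let $\nu$ denote his private belief that $\omega_2=1$. Complying yields $\nu-c$, while not testing yields $\pi\nu$. He therefore follows the recommendation iff $\gamma\le\nu$.

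The key step is to pin down which values $\nu$ can take when the public history is $\O$. Concealing $\omega_1=1$ is strictly dominated, since disclosing it guarantees $x=1$ at no further cost. Hence $r_1=\O$ arises in only two ways:
\begin{itemize}
\item the agent observed nothing at $t=1$, so $\nu=\mu_2(\O)$; or
\item he observed $\omega_1=0$ and concealed it, so $\nu=1-\rho$.
\end{itemize}
Since $\mu_0\in(0,1)$, we have $\mu_2(\O)\in(1-\rho,\rho)$, and in particular $1-\rho<\mu_2(\O)$.

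The three bullets then follow from comparing $\gamma$ with these two beliefs:
\begin{itemize}
\item If $\gamma\le 1-\rho$, then $\gamma\le\nu$ in both cases, so he always complies.
\item If $\gamma\in(1-\rho,\mu_2(\O)]$, he complies exactly when $\nu=\mu_2(\O)$, that is, when he disclosed all available information at $t=1$.
\item If $\gamma>\mu_2(\O)$, then $\gamma>\nu$ in both cases, so he never complies.
\end{itemize}

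The main obstacle is the belief bookkeeping at the history $r_1=\O$. One has to argue that concealment of $\omega_1=1$ never occurs, and distinguish the agent's private posterior from $P$'s public belief $\mu_2(\O)$. Boundary ties also need care, because the lemma's strict and weak inequalities must match the tie-breaking rule.
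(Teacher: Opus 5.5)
Your proposal is correct and follows the paper's proof: the same backward induction over $r_1\in\{0,1,\O\}$, and the same test $\gamma\le\nu$ at the null history, where the agent's private belief is $\nu\in\{\mu_2(\O),1-\rho\}$. You also make explicit the dominance argument against concealing $\omega_1=1$, which the paper only asserts and which is what makes ``never'' in the third bullet hold. On the boundary you flagged, the paper's proof also breaks the tie at $\gamma=1-\rho$ toward testing, whereas its stated tie-breaking rule (follow the recommendation) would have the agent comply with $\sigma_2^*(0)=0$ there, so under that convention the first claim should read $\gamma\geq 1-\rho$.
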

	
		Therefore, either $\gamma>1-\rho$ and $A$ follows $\sigma_2^*(0)=\sigma_2^*(1)=0$ or $\gamma\leq1-\rho$ and $A$ follows $\sigma_2^*(\O)=1$ regardless of the history.  
		Thus, $A$ does not follow the baseline testing recommendation in every history in $t=2$. However, when $\gamma\in(1-\rho,\mu_2(\O)]$,  $A$ follows $\sigma_2^*(\O)=1$ \textit{as long as} he disclosed the information available in $t=1$.  
		
	Now we take into account $A$'s behavior in $t=1$. In the following definition, we present a strategy for $A$ that deviates from the recommendations given by the baseline mechanism.

	\begin{definition}[Deviation strategy]
		\label{def:deviation}
	$e^{m^*}_1=1$, $r_1^{m^*}(e_1=1,\omega_1)=1$ if and only if $\omega_1=1$ otherwise it equals $\O$, $e^{m^*}_2(\cdot)=0$, and $r_2^{m^*}(\O,e_2=0,\omega_2)=1$ if and only if $\omega_2=1$ otherwise it equals $\O$, where $m^*$ is the baseline mechanism modified to be forcing.
	\end{definition}

	The following proposition establishes that when $A$’s effective testing costs are intermediate (formally defined by the following assumption), then the deviation previously described is profitable for the agent if and only if $A$'s effective costs fall in the lower end of that intermediate interval.\footnote{In \autoref{subsec:lowgamma} we present an analogous result  (\autoref{proposition:profdev}) for the case in which $\gamma\leq 1-\rho$.}

	\begin{assumption}[Intermediate testing costs for $A$]
		\label{assump:interGamma}
	\[\gamma\in(1-\rho,\mu_2(\O)].\]
	\end{assumption}

		\begin{proposition}[Profitable deviation from baseline]
		\label{prop:gaming1}
			Consider the case in which \autoref{assump:forcing}, \autoref{assump:interKappa} and  \autoref{assump:interGamma} hold. The deviation in \autoref{def:deviation}  is strictly profitable for $A$ if and only if
		\begin{align*}
			\label{cond1}
			\gamma< \bar{\gamma}:=\frac{(1-\rho)[\mu_0-(1-\mu_0)(1-\pi)]}{\pi}.
		\end{align*}
	\end{proposition}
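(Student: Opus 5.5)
The plan is a direct payoff comparison. I will compute $A$'s expected payoff under the forcing baseline mechanism $m^*$ when he best-responds while obeying the first-period recommendation, and compare it with his payoff under the strategy in \autoref{def:deviation}. First I will record the relevant entries of $\hat{x}^*$ under \autoref{assump:interKappa} and \autoref{assump:forcing}. If $r_2\neq\O$, then $\hat{x}=r_2$. If $r_1\in\{0,1\}$ and $r_2=\O$, then $\hat{x}=r_1$, because $\rho>1/2$ implies $\mu_2(r_1)\geq 1/2$ if and only if $r_1=1$. Finally, $\hat{x}(\O,\O)=0$. The payoff relevant for $A$ is therefore the probability of $x=1$ net of testing costs.

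\textbf{Compliant benchmark.} $A$ does not test at $t=1$. With probability $\pi$ he learns $\omega_1$ for free and discloses it; disclosure is weakly optimal, and by \autoref{lem:testObedt2} it is what keeps him obedient afterwards. After $r_1=1$ he gets $x=1$ for sure, since he will conceal any free observation $\omega_2=0$. After $r_1=0$ he does not test, because $\gamma>1-\rho$ by \autoref{lem:testObedt2}. He gets $x=1$ only if he learns $\omega_2=1$ for free and discloses it, which happens with probability $\pi(1-\rho)$. With probability $1-\pi$ we have $r_1=\O$. He then follows $\sigma_2^*(\O)=1$, which \autoref{lem:testObedt2} guarantees under \autoref{assump:interGamma}, and obtains $\mu_2(\O)-c$. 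Hence
\[
U^{C}=\pi\bigl[\mu_0+(1-\mu_0)\pi(1-\rho)\bigr]+(1-\pi)\bigl(\mu_2(\O)-c\bigr).
\]

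\textbf{Deviation.} $A$ pays $c$ and learns $\omega_1$. If $\omega_1=1$ he reports it and gets $x=1$. If $\omega_1=0$ he reports $\O$, and since $r_1=\O$ his period-2 reports are judged against $\hat{x}(\O,\cdot)$. He does not test, and he reveals $\omega_2=1$ if it arrives for free, which happens with probability $\pi(1-\rho)$; otherwise $\hat{x}(\O,\O)=0$. Thus
\[
U^{D}=\mu_0+(1-\mu_0)\pi(1-\rho)-c.
\]
Subtracting the two payoffs gives
\[
U^{D}-U^{C}=(1-\pi)\bigl[\mu_0+(1-\mu_0)\pi(1-\rho)-\mu_2(\O)\bigr]-\pi c.
\]
Next I will use $\mu_0-\mu_2(\O)=(1-\rho)(2\mu_0-1)$. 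With it, the bracket simplifies to $(1-\rho)[\mu_0-(1-\mu_0)(1-\pi)]$. Dividing by $\pi(1-\pi)>0$, the difference is strictly positive if and only if $\gamma<\bar\gamma$. Strictness matters here because the tie-breaking rule favors compliance at equality.

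\textbf{Main obstacle.} The delicate step is pinning down the correct compliant benchmark, i.e.\ $A$'s best response among strategies that do not test at $t=1$. One must check several things. Disclosing a freely observed $\omega_1=0$ is not worse than hiding it: hiding it triggers a costly test at $t=2$, or else the punishment $\hat{x}(\O,\O)=0$. Concealing a free $\omega_2=0$ after $r_1=1$ is optimal. Discretionary testing after $r_1=0$ is unprofitable, which follows from $\gamma>1-\rho$. I will lean on \autoref{lem:testObedt2} for these period-2 comparisons and verify the $t=1$ disclosure choice by a direct comparison. The comparison with hiding $\omega_1=0$ yields $\pi(1-\rho)$ against $\max\{(1-\rho)-c,\ \pi(1-\rho)\}$. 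Because $c>(1-\rho)(1-\pi)$, the first term in the maximum is below $\pi(1-\rho)$, so the maximum equals $\pi(1-\rho)$. Disclosing is therefore weakly optimal, and the benchmark stands.
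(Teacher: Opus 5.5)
Your proposal is correct and follows the paper's own argument: the same compliant-versus-deviation payoff comparison, with identical expressions for $U^{C}$ and $U^{D}$ and the same algebra leading to $\gamma<\bar{\gamma}$. The only difference is cosmetic. Under compliance the paper has $A$ conceal a freely observed $\omega_1=0$, whereas you have him disclose it; this is harmless because both give the continuation value $\pi(1-\rho)$, as your own check confirms.
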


Note that $\bar{\gamma}>1-\rho$ is equivalent to $\mu_0>1/(2-\pi)\in(1/2,1)$. Thus, a necessary condition for this deviation to be profitable is that prior belief is sufficiently high. Moreover, $\bar{\gamma}<\mu_2(\O)$ always holds.

\subsection{Improvements over the baseline mechanism}

The deviation described in \autoref{def:deviation} may generate a gap between public reports and privately gathered information. When this gap is not accommodated, allocation decisions can be inefficient \textit{ex-post}, despite appearing efficient when only public reports are considered. Moreover, relative to the baseline, the deviation increases the amount of testing and shifts testing away from the appropriate timing.

A particular \textit{result-dependent} testing policy, which conditions testing in $t=2$ not only on whether there was or not a report in $t=1$, as in the baseline, but on whether the news in the first period are good or bad from $A$'s perspective, restores $P$'s control and eliminates the agent's informational advantage.

\begin{lemma}[Inducing full disclosure under efficient assignments]
	\label{lem:implemEffAssign}
	Let 
	\[\sigma^{RD1}:=(\sigma_1=0,\sigma_2(0)=0,\sigma_2(1)=1,\sigma_2(\O)=1).\] If $\gamma\in(1-\rho,\mu_2(\O)]$, then  $\sigma^{RD1}$ induces $A$ to fully disclose all the information he acquires when $P$ commits to efficient assignments. 
\end{lemma}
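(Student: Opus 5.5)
The plan is to verify full disclosure by backward induction under the mechanism $(\sigma^{RD1},\hat{x}^*)$, made forcing via \autoref{assump:forcing}. I would first write down the assignment vector explicitly.
\begin{itemize}
\item Whenever $r_2\neq\O$, efficiency gives $\hat{x}^*(r_1,r_2)=r_2$, since $\mu_2(h_2)=r_2$.
\item Since $\sigma_2(0)=0$, at history $(0,\O)$ the belief is $1-\rho<1/2$, so $\hat{x}^*(0,\O)=0$.
\item Since $\sigma_2(1)=\sigma_2(\O)=1$, forcing sets $\hat{x}(1,\O)=\hat{x}(\O,\O)=0$.
\end{itemize}
So a null second-period report always yields $x=0$, and a reported $\omega_2$ yields $x=\omega_2$. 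This immediately handles second-period disclosure. Reporting $\omega_2=1$ strictly dominates withholding. Reporting $\omega_2=0$ gives the same payoff as withholding, so the tie-breaking rule makes $A$ disclose. Hence at $t=2$ disclosure is full after every history, and what remains is $A$'s testing choice at $t=2$ and his disclosure at $t=1$.

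Next I would compute $A$'s continuation value at the start of $t=2$ as a function of his private belief $\nu$ about $\omega_2$. Because only a reported $1$ yields $x=1$, testing gives $\nu-c$ and not testing gives $\pi\nu$. So $A$ tests iff $(1-\pi)\nu\geq c$, i.e. $\nu\geq\gamma$, and
\[
V(\nu)=\max\{\nu-c,\ \pi\nu\}.
\]
This value is the same after every history $r_1$, because the assignment after $r_2\neq\O$ ignores $r_1$ and $x=0$ after $r_2=\O$ everywhere. Under $\gamma\in(1-\rho,\mu_2(\O)]$, and using $\mu_2(\O)<\rho$, we get $\rho>\gamma$ and $1-\rho<\gamma$. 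Hence:
\begin{itemize}
\item $A$ tests when he knows $\omega_1=1$, so $\nu=\rho$, obeying $\sigma_2(1)=1$;
\item $A$ does not test when he knows $\omega_1=0$, so $\nu=1-\rho$, obeying $\sigma_2(0)=0$.
\end{itemize}
For the uninformed history, $\nu=\mu_2(\O)\geq\gamma$, so he tests, consistent with \autoref{lem:testObedt2}.

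The step I expect to be the crux is first-period disclosure. The point is that $r_1$ affects $A$'s payoff only through the recommendation and the punishment, and neither changes $V$. Suppose $A$ knows $\omega_1=1$. Disclosing leads to $\sigma_2(1)=1$ and value $\rho-c$. Withholding leads to history $\O$, again a test request with private belief $\rho$, and value $V(\rho)=\rho-c$. The two are equal, so tie-breaking gives disclosure. Suppose $A$ knows $\omega_1=0$. Disclosing gives value $\pi(1-\rho)$. Withholding leads to a test request at $\O$ that $A$ optimally ignores, since $1-\rho<\gamma$; he then reports only a freely learned $1$, for value $\pi(1-\rho)$. These are again equal, so he discloses.

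Finally, I would note that discretionary testing at $t=1$ cannot create a strict gain. It costs $c>0$, and the acquired information is then disclosed, so $\sigma_1=0$ is also obeyed. This closes the argument. The main care needed is the weak-inequality bookkeeping that makes the $\omega_1=0$ case an exact tie rather than a strict gain from withholding. That is precisely where $\gamma>1-\rho$ is used, while $\gamma<\rho$ is used in the $\omega_1=1$ case.
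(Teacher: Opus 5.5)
Up to the last step your argument is the paper's: backward induction under the forcing version of $(\sigma^{RD1},\hat{x}^*)$, second-period obedience from comparing $\nu-c$ with $\pi\nu$, and exact ties in first-period disclosure. Your observation that $V(\nu)=\max\{\nu-c,\pi\nu\}$ does not depend on $r_1$ is a clean way to get both ties at once.

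The gap is the final claim that testing at $t=1$ ``cannot create a strict gain'' because it costs $c>0$ and the information is then disclosed. Disclosure is beside the point, because $\omega_1$ has private decision value for $A$ himself. Knowing $\omega_1=0$ lets him skip the second-period test, since $1-\rho<\gamma$. An uninformed $A$ instead pays $c$ to test at belief $\mu_2(\O)\geq\gamma$. With your own $V$,
\[
\mu_0V(\rho)+(1-\mu_0)V(1-\rho)-V(\mu_2(\O))=(1-\mu_0)(1-\pi)\,[\gamma-(1-\rho)]>0 .
\]
This is strictly positive precisely because $\gamma>1-\rho$ puts the kink of $V$ (at $\nu=\gamma$) in $(1-\rho,\mu_2(\O)]$. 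So the cost alone does not settle the question; you have to weigh it against this value.

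Testing at $t=1$ delivers the information with probability one rather than $\pi$, so the net gain from the deviation is
\[
-c+(1-\pi)^2(1-\mu_0)[\gamma-(1-\rho)]=(1-\pi)\bigl\{(1-\pi)(1-\mu_0)[\gamma-(1-\rho)]-\gamma\bigr\}<0,
\]
because $(1-\pi)(1-\mu_0)<1$ and $\gamma-(1-\rho)<\gamma$. This is exactly the paper's closing inequality: testing at $t=1$ would require $\gamma\leq-(1-\pi)(1-\rho)(1-\mu_0)/[\pi+\mu_0(1-\pi)]<0$. With this computation added, your proof is complete.

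Your closing remark also misattributes the role of $\gamma>1-\rho$. The first-period disclosure ties hold because $V$ is independent of $r_1$, not because of the bounds on $\gamma$. Those bounds are what make $A$'s optimal second-period testing coincide with $\sigma^{RD1}$. As shown above, $\gamma>1-\rho$ is also exactly what makes first-period information valuable to $A$, which is the point your final step overlooked.
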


Compare $\sigma^{RD1}$ with the baseline testing policy $\sigma^*$. When \autoref{assump:interKappa} holds, this policy tests more than the baseline ($\sigma_2(1)=1$ vs. $\sigma_2^*(1)=0$).
It turns out to be the case that $\sigma^{RD1}$ provides the minimum amount of testing required, and thus the cheapest way of inducing the agent to fully disclose all the information he acquires.

\begin{lemma}[Minimum cost of full disclosure under efficient assignments]
	\label{lem:lowerbound}
	Assume $\gamma\in(1-\rho,\mu_2(\O))$. If a mechanism induces full disclosure under efficient assignments, then $\sigma\geq \sigma^{RD1}$.\footnote{By $\sigma'\geq \sigma$ we mean that each entry of $\sigma'$ (seen as a 4-vector) is weakly above the corresponding entry of $\sigma$.}
\end{lemma}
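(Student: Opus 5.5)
The plan is to prove the contrapositive entrywise. Since $\sigma^{RD1}$ has $\sigma_1=0$ and $\sigma_2(0)=0$, those two entries satisfy $\sigma\geq\sigma^{RD1}$ trivially. It therefore suffices to show that any mechanism with efficient assignments and $\sigma_2(1)=0$, or with $\sigma_2(\O)=0$, fails to induce full disclosure. I read ``full disclosure'' as the paper does in \autoref{lem:testObedt2} and \autoref{lem:implemEffAssign}: $A$ follows the testing recommendations and reports every result he observes, whether that result comes from a test or arrives for free with probability $\pi$. I also use that under full disclosure the public beliefs coincide with the true beliefs, so the efficient assignments are $\hat{x}^*(h_2)=\mathbb{1}[\mu_2(h_2)\geq 1/2]$ with $\mu_2(1,\O)=\rho$, $\mu_2(0,\O)=1-\rho$ and $\mu_2(\O,\O)=\mu_2(\O)$.

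\emph{Step 1: $\sigma_2(1)=1$ is necessary.} Suppose instead that $\sigma_2(1)=0$. Then \autoref{assump:forcing} imposes nothing at history $(1,\O)$, and efficiency gives $\hat{x}(1,\O)=1$ because $\rho>1/2$. Now consider an agent who reported $r_1=1$, did not test, and learns for free (probability $\pi>0$) that $\omega_2=0$. Disclosing yields $\hat{x}(1,0)=0$, while withholding yields $\hat{x}(1,\O)=1$. Withholding is strictly better, so the tie-breaking rule does not apply. History $r_1=1$ occurs with positive probability, since $\pi>0$ and $\mu_0>0$. Hence full disclosure fails.

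\emph{Step 2: $\sigma_2(\O)=1$ is necessary.} Suppose instead that $\sigma_2(\O)=0$. I split into two cases according to the efficient value of $\hat{x}(\O,\O)$.
\begin{itemize}
\item[(a)] If $\mu_2(\O)\geq 1/2$, then $\hat{x}(\O,\O)=1$. The argument of Step 1 applies verbatim: a freely observed $\omega_2=0$ is strictly better withheld.
\item[(b)] If $\mu_2(\O)<1/2$, then $\hat{x}(\O,\O)=0$, and withholding a $0$ is payoff-neutral, so disclosure of $0$ survives by tie-breaking. The violation must instead come from testing against the recommendation. Without testing, $A$ obtains $x=1$ only when he freely observes $\omega_2=1$, for an expected value of $\pi\mu_2(\O)$. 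If he tests and reports only a $1$, he gets $\mu_2(\O)-c$. The gain from testing is therefore $(1-\pi)\mu_2(\O)-c$, which is strictly positive exactly when $\gamma<\mu_2(\O)$.
\end{itemize}
This is where the open interval in the statement is used: the tie-breaking rule would otherwise keep $A$ obedient at $\gamma=\mu_2(\O)$. Testing when $\sigma_2(\O)=0$ contradicts obedience. It also makes $(\O,\O)$ an uninformative, off-path report, so in either reading full disclosure with efficient assignments fails.

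\emph{Step 3: sanity check.} Once $\sigma_2(1)=\sigma_2(\O)=1$, the remaining entry $\sigma_2(0)=0$ is consistent with full disclosure. At history $0$ we have $\hat{x}(0,\O)=0$ since $1-\rho<1/2$, and the gain from discretionary testing is $(1-\pi)(1-\rho)-c<0$ because $\gamma>1-\rho$. This confirms that the bound $\sigma^{RD1}$ is exactly what Steps 1 and 2 force, and by \autoref{lem:implemEffAssign} it is attained.

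The main obstacle is case (b) of Step 2. There the agent's manipulation is not concealment but unrequested testing, so I must argue carefully that this counts as a failure of full disclosure. I would argue this from the induced overtesting and the resulting loss of efficiency of $\hat{x}(\O,\O)$ relative to public beliefs. I also need to verify that the continuation after $r_1=\O$ is not affected by a possible deviation at $t=1$; since $\sigma_1=0$, I would check it by direct comparison with \autoref{lem:testObedt2}.
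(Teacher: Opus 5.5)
Your proposal is correct and follows the paper's proof essentially step for step. It excludes $\sigma_2(1)=0$ because disclosing a free $\omega_2=0$ would need $\hat{x}(1,0)\geq\hat{x}(1,\O)$, against the efficient values $0$ and $1$. It excludes $\sigma_2(\O)=0$ by the same disclosure argument when $\mu_2(\O)\geq 1/2$, and otherwise by the failure of obedience, since $\gamma<\mu_2(\O)$. Your weak inequality in $\hat{x}^*$ also covers the boundary $\mu_2(\O)=1/2$, which the paper's strict case split skips.

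As in the paper, Step 2 tacitly requires $\sigma_1=0$, so that $r_1=\O$ is on path with belief $\mu_2(\O)$. With $\sigma_1=1$, the assignments after $r_1=\O$ are unconstrained by efficiency. You flag this restriction at the end, and the paper also leaves it implicit.
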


The next step is to ask whether altering the baseline efficient assignments, coupled with a reduction in the amount of testing, can improve $P$'s expected payoffs. In the next section, we conduct that analysis.

	\section{Optimal mechanisms}
	\label{sec:optimalmech}
	
	In this section, we develop a systematic approach to characterizing the optimal mechanism over arbitrary regions of the parameter space. For expositional convenience, some results are stated for the case in which the deviation defined in \autoref{def:deviation} is profitable for the agent, though the method applies more generally. In \autoref{appen:additional}, we analyze optimal mechanisms under alternative parameter configurations.	
		
	\subsection{Incentive compatibility}
	\label{subsec:IC}	
	
	Before addressing the full problem faced by $P$ in designing an optimal mechanism, we  establish that it suffices to restrict attention to mechanisms satisfying the properties specified in the following definition.

	\begin{definition}[Incentive compatible mechanisms]
		A mechanism $m\in\mathcal{M}$ is said to be\textit{ incentive compatible (IC)} if it satisfies 
		\begin{enumerate}[i.]
			\item Obedience : $e^{m}_t(h_t)=\sigma_t(h_t)$ for all $t$ and all histories $h_t$ that occur with positive probability under $m$.
			\item  Full disclosure: $r^{m}_t(h_t,e_t,\omega_t)=\omega_t$ for all $(t,e_t,\omega_t)$ and all $h_t$ that occur with positive probability under $m$.
			\item \autoref{assump:forcing}.
		\end{enumerate}

	\end{definition}
	
	Hereafter, we will denote the obedience and full disclosure constraints by $[OB,\sigma_t(h_t)]$ and $[FD,h_{t-1},\omega_t]$, respectively.
		
	The following result, analogous to theorem 3 in \cite{deneckere2008mechanism} in the current setting, allows us to limit our attention to $IC$ mechanisms. 

	\begin{proposition}
		\label{prop:revelation}
		For any mechanism $m\in\mathcal{M}$, there exists an IC mechanism $m'=(\sigma',\hat{x}')\in \mathcal{M}$ 
		that generates the same outcomes as $m$.
		
	\end{proposition}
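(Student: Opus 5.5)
We prove \autoref{prop:revelation} by the usual revelation-principle argument, adapted to verifiable evidence as in \cite{deneckere2008mechanism}: the principal replicates the agent's optimal behavior under $m$. Fix $m=(\sigma,\hat{x})$ and the agent's optimal strategy $(e^m_t,r^m_t)$, with ties broken as in the paper. Here ``outcome'' means the joint distribution of $(\tilde{\omega}_1,\tilde{\omega}_2)$, testing decisions, and the final assignment $x$. Both players' expected payoffs depend only on this distribution.

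\textbf{Step 1 (testing policy).} Define $\sigma'$ to recommend what the agent actually does under $m$. Set $\sigma'_1:=e^m_1$. For $t=2$, write $\phi_1(r_1')$ for the report the agent would have sent under $m$ in period one when his truthful report is $r_1'$. Set $\sigma'_2(r_1'):=e^m_2(\phi_1(r_1'))$.

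The map $\phi_1$ is well defined because, given $e^m_1$, the agent's report $r^m_1(e_1,\omega_1)$ depends only on whether he observed $\omega_1$ and on its value. So a truthful report $r_1'\in\{\O,0,1\}$ pins down the report sent under $m$.

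\textbf{Step 2 (assignment policy).} Set $\hat{x}'(r_1',r_2'):=\hat{x}(\phi_1(r_1'),\phi_2(r_1',r_2'))$, where $\phi_2$ is defined analogously from $r^m_2$. By construction, when the agent obeys and discloses fully under $m'$, every realization of states and evidence arrival produces the same testing decisions and the same assignment as under $m$. Hence the outcome is the same.

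\textbf{Step 3 (incentive compatibility).} Any deviation under $m'$ consists of testing decisions and disclosures $(r_1',r_2')$. The non-falsifiability of evidence restricts these to $\O$ or the true state. Composing such a deviation with $\phi$ yields a strategy that was feasible under $m$:
\begin{itemize}
\item Withholding under $m'$ is mapped to sending $\phi(\O)$, which the agent could have sent under $m$ by withholding.
\item Disclosing $\omega_t$ under $m'$ is mapped to sending $\phi(\omega_t)\in\{\omega_t,\O\}$, which was also feasible.
\end{itemize}
The deviation gives the same assignment and cost under $m'$ as its image gives under $m$. Since the original strategy was optimal under $m$, obedience and full disclosure are optimal under $m'$. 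With the stated tie-breaking rule, indifference resolves toward obedience and disclosure.

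\textbf{Step 4 (forcing).} Impose \autoref{assump:forcing} by resetting $\hat{x}'(r_1',\O):=0$ at every $r_1'$ with $\sigma'_2(r_1')=1$. The report $\O$ after a recommended test occurs only off path: testing reveals $\omega_2$, and disclosure is optimal. So outcomes are unchanged. Lowering the agent's payoff from a deviation only strengthens the incentive constraints.

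\textbf{Main obstacle.} The delicate point is Step 3 for deviations where the agent tests when $\sigma'$ says not to test, and then selectively discloses. One must check that the induced message $\phi(\cdot)$ is a report the agent could actually have produced under $m$ with the same information. If the agent under $m$ would have sent a true result after a different testing decision, the image of the deviation must also use that testing decision, with matching costs $c$ at the end. I would handle this by defining the image deviation at the level of full strategies, pairing the same testing choices with the composed disclosures $\phi$, and verifying that costs and assignments coincide realization by realization.
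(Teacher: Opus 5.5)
Your route is the paper's own. You recommend what the agent actually does under $m$, compose $\hat{x}$ with his disclosure behavior, and then punish detected non-compliance. Your Step 3 spells out the deviation-mapping argument that the paper only asserts. Your ``main obstacle'' is not a real obstacle: since $\phi_t(\O)=\O$ and $\phi_t(\omega)\in\{\omega,\O\}$, the image of any deviation can keep the deviation's own testing choices, so costs and assignments coincide realization by realization.

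The genuine problem is in Step 1. Setting $\sigma'_2(r'_1):=e^m_2(\phi_1(r'_1))$ indexes the period-2 recommendation by the \emph{original public report}. Under $m$, however, the agent's period-2 testing choice depends on his private period-1 information, through his belief about $\omega_2$. Different information states can pool on the same $r_1$ and then test differently; this is exactly the comparison in \autoref{lem:testObedt2}.

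For instance, let $\sigma_2(\O)=1$, $\hat{x}(\O,1)=1$, $\hat{x}(\O,0)=\hat{x}(\O,\O)=0$, $\hat{x}(0,\cdot)\equiv\hat{x}(1,\cdot)\equiv 0$, and $\gamma\in(1-\rho,\mu_2(\O))$. Under $m$ the agent does not test at $t=1$ and strictly conceals any free $\omega_1$. At $r_1=\O$ he tests unless he concealed $\omega_1=0$, in which case his belief is $1-\rho<\gamma$. Then $\phi_1\equiv\O$, so your rule gives all three information states the same $\sigma'_2$, although they behave differently under $m$. Whichever value it takes, some type disobeys in $m'$:
\begin{itemize}
\item a recommended test is refused by the $\omega_1=0$ type;
\item a recommendation not to test is ignored by the other two types, whose beliefs exceed $\gamma$.
\end{itemize}
So both the replication claim of Step 2 and the IC conclusion fail.

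The fix is to let $\sigma'_2(r'_1)$ be the period-2 testing choice under $m$ of an agent whose period-1 information is $r'_1$ (no evidence, or evidence $\omega_1$). This is exactly what a truthful report reveals, and it is still a function of the public history of $m'$. With this definition your Steps 2--4 go through as written. The paper's $\sigma''_t$, defined through $e^M_t(h_t)$ on the original public history, carries the same imprecision. Your $\phi_1$-translation inside $\hat{x}'$ is more careful than the paper's $\hat{x}''$.
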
	
	
	We adopt the tie-breaking rule that \textit{when indifferent}, $A$ follows the test recommendation and discloses the test result.
		
	The following example provides the $IC$ mechanism that generates the same outcomes as the baseline.
	\begin{example}
		As shown in \autoref{prop:gaming1}, when $\gamma\in (1-\rho,\bar{\gamma})$, the baseline mechanism generates the same outcomes as the following $IC$ mechanism: $\sigma_1=1$, $\sigma_2(r_1)=\mathbb{1}[r_1=0]$, and \[\hat{x}(r_1,r_2)=\mathbb{1}[r_1=1]+\mathbb{1}[r_1=0]\mathbb{1}[r_2=1].\]
		Note that while the baseline mechanism gives testing recommendation that are independent of previously submitted test results, the corresponding $IC$ mechanism is \textit{result-dependent}. 
	\end{example}
		
	Finally, define $A$'s expected payoff of sending report $r_1$ after observing $\omega_1$  as follows,
	\begin{align*}
	v(r_1,\mu_2):=&-c\sigma_2(r_1)+[\sigma_2(r_1)+[1-\sigma_2(r_1)]\pi][\mu_2 \hat{x}(r_1,1)+(1-\mu_2)\hat{x}(r_1,0) ]\\
	&+[1-\sigma_2(r_1)](1-\pi)\hat{x}(r_1,\O),
	\end{align*}	
	where $\mu_2:=\mathbb{1}[\omega_1=1]\rho+\mathbb{1}[\omega_1=0](1-\rho)$. Whenever we are considering $IC$ mechanisms, we will drop the dependence of $v$ on $\mu_2$, since it is clear that $\mu_2$ follows from $r_1$. However, when considering deviations we will point out which beliefs $A$ is using to compute his expected payoff.

	We next leverage the properties of $IC$ mechanisms to substantially simplify the formulation of $P$'s problem.

	On one hand, $A$ optimally discloses a test result $\omega_2$ in $t=2$ if and only if it leads to a (weakly) higher allocation than $r_2=\O$, regardless of any previous report $r_1$. On the other hand, in $t=1$, $A$ also takes into account the effect that the report $r_1$ has on future test recommendations and allocations. 
	
	Formally, after report $r_1$ (leading to testing recommendation $\sigma_2$) and observing $\omega_2$, $A$ optimally discloses this result if and only if 
	\begin{align*}
		&[FD,r_1,\omega_2]&\quad \hat{x}^{\sigma_2}(r_1,\omega_2)\geq \hat{x}^{\sigma_2}(r_1,\O)\quad\forall r_1\text{ and } \omega_2\in\{0,1\}.
	\end{align*}
	Moving to the first period, upon observing $\omega_1$, $A$ optimally discloses this result if and only if 
	\begin{align*}
		&[FD,\omega_1]&\quad v(\omega_1,\mu_2(\omega_1))\geq v(\O,\mu_2(\omega_1))\quad\text{for } \omega_1\in\{0,1\}.
	\end{align*}
	
	The next lemma follows from the mechanism being forcing.

	\begin{lemma}
		\label{lem:FDfree}
		If $[OB,\sigma_t(h_t)=1]$ holds then $[FD,h_t,\omega_t]$  is implied by the mechanism being forcing.
	\end{lemma}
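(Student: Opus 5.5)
We argue directly from \autoref{assump:forcing} and the tie-breaking rule, treating the two periods in turn. Fix a history $h_t$ at which $\sigma_t(h_t)=1$ and obedience $[OB,\sigma_t(h_t)=1]$ holds, so that $A$ tests and observes $\omega_t$ with probability one. Once a test result is available, the only remaining choice is whether to report $r_t=\omega_t$ or $r_t=\O$. The plan is to show that reporting $\O$ after a requested test is weakly dominated, so that the tie-breaking rule yields disclosure.

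\textbf{Second period.} Take $h_1=r_1$ with $\sigma_2(r_1)=1$. Forcing implies that if $A$ withholds, then $\hat{x}(r_1,\O)=0$, which is $A$'s least preferred assignment. Disclosing gives $\hat{x}(r_1,\omega_2)\in\{0,1\}$, so $\hat{x}(r_1,\omega_2)\geq 0=\hat{x}(r_1,\O)$. The testing cost $c$ is already sunk, since it is paid regardless of the report. Hence $[FD,r_1,\omega_2]$ holds, strictly or with indifference, and indifference resolves in favor of disclosure.

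\textbf{First period.} Take $\sigma_1=1$. The paper notes that punishments for deviating from $\sigma_1=1$ are formally unnecessary, so here we must use the forcing principle in its general form: a detected deviation, namely a requested test with no reported result, triggers $A$'s worst outcome and sets all future recommendations to zero. Concretely, the null report $r_1=\O$ is treated as noncompliance, so $\hat{x}(\O,\cdot)=0$ and $\sigma_2(\O)=0$. Then $v(\O,\mu_2)=0$. On the other hand, $A$ can always secure $v(\omega_1,\mu_2)\geq 0$ by not testing in $t=2$, since assignments are nonnegative and no cost is incurred. Because the cost of the first-period test is sunk, $[FD,\omega_1]$ follows, again using the tie-breaking rule when the two values are equal.

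\textbf{Main obstacle.} The delicate step is the first period. We must justify reading forcing as assigning the worst continuation to $r_1=\O$ when $\sigma_1=1$, even though \autoref{assump:forcing} is stated only for $t=2$. The justification is the sentence preceding the assumption, which says that any detected deviation is punished with the least preferred assignment and that future test recommendations are set to zero. We must also check that the continuation value after disclosure is at least zero. This follows because $A$ may decline any further test at zero cost and the worst case he then faces is $x=0$. A second-period deviation by $A$ after disclosing in $t=1$ would only raise his value further, so it can only reinforce the inequality.
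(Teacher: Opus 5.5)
Your proof is correct and is essentially the paper's argument, which the paper states only as "follows from the mechanism being forcing." Withholding after a requested test triggers $A$'s worst assignment, so disclosure weakly dominates, and the tie-breaking rule delivers it. Your extra care at $t=1$ is appropriate: you invoke forcing in the general form described before \autoref{assump:forcing} and built into the proof of \autoref{prop:revelation}, since the assumption is literally stated only for $t=2$. That case is the only one the paper actually uses, because its program imposes $\sigma_1=0$.
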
	
	
	Thus, we only need to consider the full disclosure constraints in $t$ when a test was not requested in that period.
	
	Given full disclosure, the belief at the beginning of the second period ($\mu_2$) is public. Moreover, $\mu_2(r_1)=\mathbb{1}[r_1=1]\rho+\mathbb{1}[r_1=0](1-\rho)+\mathbb{1}[r_1=\O]\mu_2(\O)$. In order to simplify notation, we do not explicitly indicate this dependence.
	
	For obedience, two different constraints can operate in each period: one for each possible testing recommendation. The choice of testing policy determines which constraint is active. Full disclosure, together with the description of the mechanism, fully pins down $A$'s expected payoffs for each report.
	
	Consider a testing policy with $\sigma_2(r_1)=1$ for some $r_1\in\{0,1,\O\}$. On one hand, if $A$ follows this recommendation he expects to get $-c+\E_{\mu_2}[\hat{x}(r_1,\tilde{r}_2)]$. On the other hand, by not testing he expects $\E_{\mu_2}[\hat{x}(r_1,\tilde{r}_2)]$ with probability $\pi$ and 0 otherwise.  Note that when limiting attention to forcing mechanisms, we know that $\sigma_2(r_1)=1$ implies $\hat{x}(r_1,\O)=0$. Therefore, $A$ obeys  $\sigma_2(r_1)=1$ if and only if
	\begin{align*}
		&[OB,\sigma_2(r_1)=1]&\quad\gamma\leq \E_{\mu_2}[\hat{x}(r_1,\tilde{r}_2)]\quad\forall r_1\text{ with }\sigma_2(r_1)=1.
	\end{align*}
	Now, if $\sigma_2(r_1)=0$ for some $r_1$, we have that by following this recommendation $A$ expects to get $\E_{\mu_2}[\hat{x}(r_1,\tilde{r}_2)]$ with probability $\pi$ and $\hat{x}(r_1,\O)$ with probability $1-\pi$. In contrast, by testing $A$ expects a payoff of $-c+ \E_{\mu_2}[\hat{x}(r_1,\tilde{r}_2)]$. As a result, $A$ obeys this recommendation if and only if 
	\begin{align*}
		&[OB,\sigma_2(r_1)=0]&\quad\gamma\geq \E_{\mu_2}[\hat{x}(r_1,\tilde{r}_2)]-\hat{x}(r_1,\O)\quad\forall r_1\text{ with }\sigma_2(r_1)=0.
	\end{align*}
	Combining $[OB,\sigma_2(r_1)=0]$ with $[FD,r_1,\omega_2]$ provides bounds on the degree in which assignments can be adjusted to $r_2$.
	
	\begin{lemma}
		\label{lem:FDOBt2}	
		In any IC mechanism, $\sigma_2(r_1)=0$ for some $r_1$ implies \[\gamma\geq \E_{\mu_2}[\hat{x}(r_1,\tilde{r}_2)]-\hat{x}(r_1,\O)\geq0.\]
	\end{lemma}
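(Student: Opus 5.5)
The plan is to obtain the two inequalities separately from the two families of constraints that an IC mechanism must satisfy at a history $r_1$ with $\sigma_2(r_1)=0$. The first inequality is exactly $[OB,\sigma_2(r_1)=0]$. The second follows from $[FD,r_1,\omega_2]$ by averaging over the distribution of $\tilde{\omega}_2$.

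\emph{First inequality.} Fix $r_1$ with $\sigma_2(r_1)=0$, occurring with positive probability. By obedience, $A$ weakly prefers not testing. Not testing yields $\pi\,\E_{\mu_2}[\hat{x}(r_1,\tilde{r}_2)]+(1-\pi)\hat{x}(r_1,\O)$. Testing yields $-c+\E_{\mu_2}[\hat{x}(r_1,\tilde{r}_2)]$. Both expressions already use full disclosure, so $\tilde{r}_2=\tilde{\omega}_2$ whenever a result is observed. Rearranging the comparison and dividing by $1-\pi>0$ gives
\[\gamma\geq \E_{\mu_2}[\hat{x}(r_1,\tilde{r}_2)]-\hat{x}(r_1,\O),\]
which is the derivation of $[OB,\sigma_2(r_1)=0]$ given in the text. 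One point needs care. When $A$ deviates to testing, he might then withhold the result, so his deviation payoff should be computed with his optimal disclosure. I will handle this with the second step: because $[FD,r_1,\omega_2]$ holds, disclosing is optimal after testing, so the deviation value is exactly $-c+\E_{\mu_2}[\hat{x}(r_1,\tilde{r}_2)]$.

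\emph{Second inequality.} Full disclosure at this history (Lemma \ref{lem:FDfree} does not apply here since no test was requested, so the constraint is a genuine requirement of IC) gives $\hat{x}(r_1,\omega_2)\geq \hat{x}(r_1,\O)$ for both $\omega_2\in\{0,1\}$. Under full disclosure, $\E_{\mu_2}[\hat{x}(r_1,\tilde{r}_2)]=\mu_2\hat{x}(r_1,1)+(1-\mu_2)\hat{x}(r_1,0)$, where $\mu_2$ is the public belief following $r_1$. This is a convex combination of two terms, each at least $\hat{x}(r_1,\O)$, so the difference is nonnegative.

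I do not expect a real obstacle. The only subtle point is confirming that the expectation in the obedience constraint is taken over reported results under full disclosure, so that it coincides with the convex combination used in the second step. Both steps apply at every history reached with positive probability. At unreached histories the constraints are vacuous, so I will state the lemma for histories on the equilibrium path, consistent with the definition of IC.
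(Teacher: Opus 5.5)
Your proposal is correct and follows essentially the same route as the paper's proof. The upper bound is just $[OB,\sigma_2(r_1)=0]$, and the lower bound comes from weighting $[FD,r_1,\omega_2=1]$ and $[FD,r_1,\omega_2=0]$ by $\mu_2$ and $1-\mu_2$ and adding; your extra care about optimal disclosure after a deviation to testing, and about on-path histories, is a harmless refinement that the paper leaves implicit.
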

	An alternative way to look at this condition is that \[\mu_2[\hat{x}(r_1,1)-\hat{x}(r_1,\O)]+(1-\mu_2)
	[\hat{x}(r_1,0)-\hat{x}(r_1,\O)]\in[0,\gamma].\]
	In other words, the (weighted) average difference between $\hat{x}(r_1,r_2)$ and $\hat{x}(r_1,\O)$ for $r_2\in\{0,1\}$ must be in the interval $[0,\gamma]$. Thus, $\gamma$ (equivalently $c$) determines the scope that $P$ has to request no test and differentiate the assignments according to $r_2$. As the following result shows, when $c=0$, this scope completely closes, and no test in $t=2$ is incentive compatible if and only if the assignments do not condition on $r_2$.
	
	\begin{lemma}
		\label{lem:FDOBt2c0}
		Assume $c=0$. In any IC mechanism if $r_1$ is such that $\sigma_2(r_1)=0$ we must have $\hat{x}(r_1,0)=\hat{x}(r_1,1)=\hat{x}(r_1,\O).$
	\end{lemma}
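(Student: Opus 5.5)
With $c=0$ we have $\gamma=c/(1-\pi)=0$. The plan is to apply \autoref{lem:FDOBt2} with this value and then exploit the binary range of the assignments. Fix $r_1$ with $\sigma_2(r_1)=0$. That lemma gives
\[0=\gamma\geq \E_{\mu_2}[\hat{x}(r_1,\tilde{r}_2)]-\hat{x}(r_1,\O)\geq 0,\]
so
\[\mu_2[\hat{x}(r_1,1)-\hat{x}(r_1,\O)]+(1-\mu_2)[\hat{x}(r_1,0)-\hat{x}(r_1,\O)]=0.\]

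Next I use the full disclosure constraints $[FD,r_1,\omega_2]$ for $\omega_2\in\{0,1\}$. These are required in any IC mechanism, since disclosure after a no-test recommendation is not covered by \autoref{lem:FDfree}. They give $\hat{x}(r_1,\omega_2)\geq \hat{x}(r_1,\O)$, so each bracketed difference is nonnegative.

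The weights are strictly positive: $\mu_2\in\{\rho,1-\rho,\mu_2(\O)\}\subset(0,1)$ because $\rho\in(1/2,1)$ and $\mu_0\in(0,1)$. A positively weighted sum of nonnegative terms that equals zero forces each term to vanish. Hence $\hat{x}(r_1,1)=\hat{x}(r_1,\O)$ and $\hat{x}(r_1,0)=\hat{x}(r_1,\O)$.

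The one delicate point is that \autoref{lem:FDOBt2} and the IC constraints are imposed only on histories reached with positive probability. If $r_1$ is reached but some $r_2$ is not, the corresponding $\hat{x}$ value is unconstrained. I handle this by noting two facts.
\begin{itemize}
\item Under $\sigma_2(r_1)=0$, both $\omega_2$ values arise with positive probability ($\mu_2\in(0,1)$, $\pi>0$), and $r_2=\O$ arises with probability at least $1-\pi>0$. So all three continuation histories are reached whenever $r_1$ is.
\item For an $r_1$ reached with probability zero, the values of $\hat{x}$ are payoff-irrelevant. They can be set equal without loss, consistent with the outcome-equivalence in \autoref{prop:revelation}.
\end{itemize}
The statement is then read for on-path $r_1$, or for the canonical IC representative otherwise.
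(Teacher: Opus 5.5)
Your proof is correct and follows the paper's own route: Lemma~\ref{lem:FDOBt2} with $\gamma=0$ yields $\E_{\mu_2}[\hat{x}(r_1,\tilde{r}_2)]=\hat{x}(r_1,\O)$, and the constraints $[FD,r_1,\omega_2]$ then force the three values to coincide. The paper finishes with a case split on the order of $\hat{x}(r_1,1)$ and $\hat{x}(r_1,0)$, while you note that a positively weighted sum of nonnegative terms vanishes, which is the same step.

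Your off-path caveat is extra care the paper omits. Off-path assignments do not affect $P$'s payoffs, but they do affect deviation incentives. So the safe normalization is to set them to the punishment value $0$, as in the proof of Proposition~\ref{prop:revelation}, rather than to an arbitrary common value.
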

	
	Now we turn our focus to $A$'s  testing decision in the first period when $\sigma_1=0$. If $A$ obeys, he gets $\E_{\mu_0}[v(\tilde{r}_1,\tilde{\mu}_2)]$ with probability $\pi$ and $\E_{\mu_0}[v(\O,\tilde{\mu}_2)]$ otherwise. In contrast, by testing $A$ expects to get $-c+\E_{\mu_0}[v(\tilde{r}_1,\tilde{\mu}_2)]$. Therefore, $A$ obeys $\sigma_1=0$ if and only if
	\begin{align*}
		&[OB,\sigma_1=0]&\quad\gamma\geq \E_{\mu_0}[v(\tilde{r}_1,\tilde{\mu}_2)]-\E_{\mu_0}[v(\O,\tilde{\mu}_2)].
	\end{align*} 
	Similar to the second period, $[OB,\sigma_1=0]$ and the full disclosure constraints put bounds on the scope that the principal has to vary $v(r_1)$ with $r_1$.
	\begin{lemma}
		\label{lem:FBOBt1}
		In any IC mechanism with $\sigma_1=0$ we must have
		\[\gamma\geq \E_{\mu_0}[v(\tilde{r}_1,\tilde{\mu}_2)]-\E_{\mu_0}[v(\O,\tilde{\mu}_2)]\geq 0.\]
	\end{lemma}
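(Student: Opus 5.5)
The upper bound $\gamma\geq \E_{\mu_0}[v(\tilde{r}_1,\tilde{\mu}_2)]-\E_{\mu_0}[v(\O,\tilde{\mu}_2)]$ is exactly $[OB,\sigma_1=0]$, so it holds by obedience, which is part of the definition of an IC mechanism. The substantive claim is the lower bound, $\E_{\mu_0}[v(\tilde{r}_1,\tilde{\mu}_2)]\geq \E_{\mu_0}[v(\O,\tilde{\mu}_2)]$. I would obtain it from the first-period full disclosure constraints, mirroring the argument behind \autoref{lem:FDOBt2}.

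First I check which constraints are available. Under $\sigma_1=0$ the agent observes $\omega_1$ with probability $\pi>0$, so the histories following $r_1=0$ and $r_1=1$ arise with positive probability whenever the corresponding state has positive prior probability. Since $\mu_0\in(0,1)$, both do. Hence $[FD,\omega_1]$ must hold for $\omega_1\in\{0,1\}$:
\[
v(\omega_1,\mu_2(\omega_1))\geq v(\O,\mu_2(\omega_1)).
\]
Here $v(\O,\mu_2(\omega_1))$ is the payoff from concealing, evaluated at the agent's private belief. This matches how $\E_{\mu_0}[v(\O,\tilde{\mu}_2)]$ is defined: the expectation is over $\omega_1$, with the agent's belief $\tilde{\mu}_2$ determined by $\omega_1$, while the mechanism treats the report as $\O$.

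Next I take expectations. Multiplying the $\omega_1=1$ constraint by $\mu_0$ and the $\omega_1=0$ constraint by $1-\mu_0$, then adding, gives
\[
\mu_0 v(1,\rho)+(1-\mu_0)v(0,1-\rho)\geq \mu_0 v(\O,\rho)+(1-\mu_0)v(\O,1-\rho).
\]
The left side is $\E_{\mu_0}[v(\tilde{r}_1,\tilde{\mu}_2)]$ and the right side is $\E_{\mu_0}[v(\O,\tilde{\mu}_2)]$, which is the desired lower bound.

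The main step to be careful about is that $v(\O,\cdot)$ must be evaluated at the belief $\mu_2(\omega_1)$ and not at $\mu_2(\O)$, so that the two expectations match the terms in $[OB,\sigma_1=0]$. If the paper's obedience expression instead uses the public belief after $\O$, I would rewrite it using linearity of $v$ in $\mu_2$. Since $v(\O,\mu)$ is affine in $\mu$ and $\E_{\mu_0}[\tilde{\mu}_2]=\mu_2(\O)$, we have $\E_{\mu_0}[v(\O,\tilde{\mu}_2)]=v(\O,\mu_2(\O))$. The two readings therefore agree, and the argument goes through either way.
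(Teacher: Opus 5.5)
Your proof is correct and is the paper's argument. The upper bound is $[OB,\sigma_1=0]$, and the lower bound comes from weighting $[FD,\omega_1=1]$ by $\mu_0$ and $[FD,\omega_1=0]$ by $1-\mu_0$ and adding. Your remark that $v(\O,\cdot)$ is affine in $\mu_2$, so that $\E_{\mu_0}[v(\O,\tilde{\mu}_2)]=v(\O,\mu_2(\O))$, is a valid consistency check but is not needed.
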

	
	\subsection{The optimal mechanism}
	
	The problem is formulated under the restriction $\sigma_1=0$. The proof of \autoref{prop:HDopt} shows that this restriction is indeed satisfied.
	
	Given a mechanism $(x,\sigma)$ we can express $P$'s expected payoff as
	\begin{align*}
W(x,\sigma):=\E_{x,\sigma}[\mathbb{1}[\hat{x}(\tilde{r}_1,\tilde{r}_2)=\tilde{\omega}_2]]-k\E_{\sigma_1=0}[\sigma_2(\tilde{r}_1)]
	\end{align*}

	where the expectation is taken over the distribution over public histories jointly induced by the distribution of states of the world, tests (free and mandatory), and allocations generated by $IC$ mechanisms.
		
	Hence, $P$'s program is given by
	\begin{align*}
		&\max_{(x,\sigma)\in\mathcal{M}}& &W(x,\sigma),\\
		& s.t.&\\
		&[OB,\sigma_1=0]&  &\gamma\geq \E_{\mu_0}[v(\tilde{r}_1,\tilde{\mu}_2)]-\E_{\mu_0}[v(\O,\tilde{\mu}_2)],  \\
		&[OB,\sigma_2(r_1)=1]& &\gamma\leq \E_{\mu_2}[\hat{x}(r_1,\tilde{r}_2)]\quad\forall r_1\text{ with }\sigma_2(r_1)=1,\\
		&[OB,\sigma_2(r_1)=0]& &\gamma\geq \E_{\mu_2}[\hat{x}(r_1,\tilde{r}_2)]-\hat{x}(r_1,\O)\quad\forall r_1\text{ with }\sigma_2(r_1)=0,\\
		&[FD,\omega_1=1]& &v(1,\rho)\geq v(\O,\rho), \\
		&[FD,\omega_1=0]& &v(0,1-\rho)\geq v(\O,1-\rho),\\
		&[FD,r_1,\omega_2=1]& &\hat{x}(r_1,1)\geq \hat{x}(r_1,\O)\quad\forall r_1\text{ with }\sigma_2(r_1)=0, \\
		&[FD,r_1,\omega_2=0]& &\hat{x}(r_1,0)\geq \hat{x}(r_1,\O)\quad\forall r_1\text{ with }\sigma_2(r_1)=0. 
	\end{align*}
	The following result characterizes the solution to $P$’s program when testing costs are intermediate for both parties. 
	
	\begin{proposition}
		\label{prop:HDopt}
		Consider the case in which \autoref{assump:interKappa} and \autoref{assump:interGamma}  hold. Then, the optimal mechanism is as follows: $\sigma_1=0$ and
		\begin{itemize}
			\item If $\kappa>\bar{\kappa}:=(1-\rho)/(1-\pi)$ and $\gamma\geq\bar{\gamma}$,  then the optimal testing policy is given by $\sigma_2(r_1)=\mathbb{1}[r_1=\O]$, while the optimal assignment policy is given by  
			\[\hat{x}(r_1,r_2)=\mathbb{1}[r_1=1]+\mathbb{1}[r_1\neq1]\mathbb{1}[r_2=1].
			\]
			\item If $\kappa\in(1-\rho,\bar{\kappa})$ or $\gamma<\bar{\gamma}$, then $\sigma^{RD1}$ is the optimal testing policy, while the optimal assignment policy is given by  
			\[\hat{x}(r_1,r_2)=\mathbb{1}[r_2=1].
			\]
			\item If $\kappa=\bar{\kappa}$, any of the two previous mechanisms is optimal.
		\end{itemize}
	\end{proposition}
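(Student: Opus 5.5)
The plan is to exploit finiteness. By \autoref{prop:revelation} we may restrict to IC mechanisms, so $P$'s program above is a finite problem. I would solve it in three stages: first pin down $\sigma_1$, then for each of the eight second-period testing policies compute the best IC assignment policy, and finally compare the surviving candidates.

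\textbf{Stage 1: $\sigma_1=0$.} Take any IC mechanism with $\sigma_1=1$. I would show it is weakly dominated by a mechanism with $\sigma_1=0$, $\sigma_2(\O)=1$ and the same continuation after reports $0$ and $1$. Under $\sigma_1=0$, with probability $\pi$ the first-period information arrives for free. Otherwise a second-period test yields $\omega_2$ itself, which is payoff-sufficient. Under \autoref{assump:interKappa} the saved cost $k$ outweighs any informational loss. To check that this replacement remains IC, I would use \autoref{lem:FBOBt1} together with the $[OB,\sigma_2(\O)=1]$ constraint, which holds since $\gamma\le\mu_2(\O)$.

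\textbf{Stage 2: best assignment for each $\sigma_2$.} I would go through the histories $r_1\in\{0,1,\O\}$ one at a time.

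\begin{enumerate}[i.]
\item \emph{After $r_1=\O$.} Not testing is dominated: the bound $\kappa\le\min\{\mu_2(\O),1-\mu_2(\O)\}$ is exactly the condition from \autoref{prop:baseline} under which testing is worth its cost. Obedience $[OB,\sigma_2(\O)=1]$ holds under $\hat x(\O,r_2)=\mathbb{1}[r_2=1]$ because $\gamma\le\mu_2(\O)$.
\item \emph{After $r_1=0$ with no test.} The efficient choice $\hat x(0,\cdot)=\mathbb{1}[r_2=1]$, with $\hat x(0,\O)=0$, satisfies \autoref{lem:FDOBt2}, since $\E_{\mu_2}[\hat{x}(0,\tilde{r}_2)]-\hat{x}(0,\O)=1-\rho<\gamma$. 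Testing after $r_1=0$ is not worth it because $\kappa>1-\rho$.
\item \emph{After $r_1=1$.} This is the key history, where there are two options.
\begin{enumerate}[a.]
\item Test, with $\hat x(1,r_2)=\mathbb{1}[r_2=1]$. This is obedient since $\rho\ge\gamma$.
\item Do not test. The efficient default is $\hat x(1,\O)=1$, and then $[FD,1,\omega_2=0]$ forces $\hat x(1,0)=1$. So the best no-test assignment is the constant $1$, which is correct with probability $\rho$. The alternative no-test assignments that keep $\hat x(1,\O)=0$ are worse for $P$.
\end{enumerate}
\end{enumerate}

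This leaves two candidates. Mechanism I is (b) together with the baseline testing policy; Mechanism II is (a) together with $\sigma^{RD1}$.

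\textbf{Stage 3: feasibility and comparison.} Mechanism II is IC by \autoref{lem:implemEffAssign}. For Mechanism I, I would write out $[OB,\sigma_1=0]$ and $[FD,\omega_1]$ explicitly using $v$. These reduce to the agent not preferring the strategy in \autoref{def:deviation}, so by \autoref{prop:gaming1} Mechanism I is IC iff $\gamma\ge\bar\gamma$.

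Payoffs coincide on every branch except $r_1=1$, which occurs with probability $\pi\mu_0$. On that branch, Mechanism II earns $1-k$ and Mechanism I earns $\rho$. Mechanism II is therefore strictly better iff $k<1-\rho$, i.e.\ $\kappa<\bar\kappa$, with a tie at $\kappa=\bar\kappa$. When $\gamma<\bar\gamma$, only Mechanism II is feasible. This gives the three cases of the proposition.

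The main obstacle is the exhaustiveness in Stage 2. The comparison in Stage 3 must be against all IC mechanisms, not just the two natural candidates. In particular, when $\gamma<\bar\gamma$ I must rule out distorted mechanisms that leave $r_1=1$ untested but cheaply repair $[OB,\sigma_1=0]$, for instance by lowering $\hat x(1,\cdot)$ or raising $\hat x(\O,\cdot)$. I would first try to bound $P$'s payoff directly: any such distortion costs at least $\pi\mu_0(1-\rho)$ in accuracy, more than the testing cost $\pi\mu_0 k$ it saves, which again reduces to $\kappa<\bar\kappa$. If that bound fails, I would fall back on enumerating the few remaining assignment vectors compatible with \autoref{lem:FDOBt2} and \autoref{lem:FBOBt1}.
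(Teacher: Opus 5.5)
\paragraph{Verdict.} Your skeleton matches the paper's proof, but the obstacle you flag at the end is a real gap, and the bound you propose does not close it.

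\paragraph{Where you agree with the paper.} You pin the $r_1=\O$ branch at $\sigma_2(\O)=1$, $\hat x(\O,r_2)=\mathbb{1}[r_2=1]$ (the paper's \autoref{lem:optempt}). You leave $r_1=0$ untested with efficient assignments. After $r_1=1$ you reduce to two options: the untested rule $\hat x(1,\cdot)\equiv 1$, which passes $[OB,\sigma_1=0]$ iff $\gamma\ge\bar\gamma$, the threshold of \autoref{prop:gaming1}; or $\sigma_2(1)=1$ with efficient assignments. You then compare $\rho$ with $1-k$ on an event of probability $\pi\mu_0$. One small correction: the untested rule $\hat x(1,1)=1$, $\hat x(1,0)=\hat x(1,\O)=0$ is excluded by $[OB,\sigma_2(1)=0]$, since $\gamma\le\mu_2(\O)<\rho$. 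It is not excluded because it is worse for $P$; for some parameters it is better.

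\paragraph{The gap.} Weighing an accuracy loss of $\pi\mu_0(1-\rho)$ against the saved $\pi\mu_0 k$ only settles $k<1-\rho$, i.e.\ $\kappa<\bar\kappa$. That case is already immediate branch by branch: Mechanism II attains the unconstrained per-branch optimum after $r_1=0$ and after $r_1=\O$, and no untested IC rule after $r_1=1$ earns more than $\rho<1-k$. The case that actually needs an argument is $\gamma<\bar\gamma$ together with $\kappa>\bar\kappa$. There Mechanism I would beat Mechanism II if it were feasible. You must show that no other repair of the period-1 constraints recovers the saving $\pi\mu_0[k-(1-\rho)]$.

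\paragraph{The missing cross-branch comparison.} Suppose $\sigma_2(1)=0$.
\begin{enumerate}
\item If the $\O$ branch is the efficient one, $[FD,1,\omega_2]$ and $[OB,\sigma_2(1)=0]$ leave only $(\hat x(1,1),\hat x(1,0),\hat x(1,\O))\in\{(1,1,1),(0,1,0),(0,0,0)\}$.
\item The last two give $v(1)\le\pi(1-\rho)<\pi\rho$. That is less than an agent who conceals $\omega_1=1$ can secure, so $[FD,\omega_1=1]$ forces $v(1)=1$.
\item $[FD,\omega_1=0]$, evaluated against an agent who conceals $\omega_1=0$ and then skips the requested test, forces $v(0)\ge\pi(1-\rho)$.
\item With these values, $[OB,\sigma_1=0]$ requires $\gamma\ge\bar\gamma$. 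So lowering $\hat x(1,\cdot)$ or $v(0)$ cannot help, and when $\gamma<\bar\gamma$ the $\O$ branch itself must change on path.
\item Since $\gamma<\bar\gamma$ forces $\mu_0>1/(2-\pi)>1/2$, every remaining IC configuration of the $\O$ branch is worth at most $\mu_2(\O)$ to $P$. These are: tested with $\hat x(\O,0)=1$; untested with $\hat x(\O,\cdot)$ constant; untested with $\hat x(\O,1)=0$.
\item The one exception is the untested rule $\mathbb{1}[r_2=1]$, which is IC only at $\gamma=\mu_2(\O)$. It gives the agent exactly the same continuation values as the efficient branch, so it hits the same contradiction.
\item \autoref{assump:interKappa} reads $k\le(1-\pi)(1-\mu_2(\O))$. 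Hence the loss on the $\O$ branch is $(1-\pi)[(1-k)-\mu_2(\O)]\ge\pi(1-\pi)(1-\mu_2(\O))$.
\item That loss strictly exceeds the largest possible saving, $\pi\mu_0[k-(1-\rho)]<\pi\mu_0(1-\pi)(1-\mu_2(\O))$.
\end{enumerate}
The paper skips this step by fixing the $\O$ branch via \autoref{lem:optempt} before examining $r_1=1$. Your Stage 2(i) plays the same role, but, as you suspected, a per-branch argument does not by itself control the period-1 constraints.

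\paragraph{Stage 1 fails as written.} Under $\sigma_1=1$, the forcing punishment makes $[FD,\omega_1]$ vacuous. So the original continuation after $r_1=0$ can be harsher than $[FD,\omega_1=0]$ allows once $\sigma_1=0$ (e.g.\ $\hat x(0,\cdot)\equiv 0$). Also, \autoref{lem:FBOBt1} is only a necessary condition, so it cannot certify that your replacement is IC. A direct bound is simpler. When $\kappa>1-\rho$, any mechanism with $\sigma_1=1$ earns at most $-k+\max\{1-k,\pi+(1-\pi)\rho\}=1-k-(1-\pi)(1-\rho)$. This is below every branch value of Mechanism II, so such mechanisms are never optimal.
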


	In other words, the previous result establishes that, for the region of interest (when \autoref{assump:interKappa}  holds  and the deviation in \autoref{def:deviation} is strictly profitable for $A$), $\sigma^{RD1}$ is indeed the \textit{optimal testing policy}. 
		
	The main intuition behind \autoref{prop:HDopt} is as follows. On one hand, when the deviation in \autoref{def:deviation} is profitable for $A$, preventing him from testing in the first period requires the continuation value $v(r_1,\mu_2)$  to be invariant to $r_1$. On the other hand, even when such deviation is not profitable, other deviations still constrain the design of an optimal mechanism. Specifically, $A$ may want to test after $r_1=1$ when $P$ finds that decision too costly.

	In particular, as in the baseline mechanism, the optimal mechanism must satisfy $\sigma_2(\O)=1$. Given the intermediate value of $\gamma$, $A$ is unwilling to test at $t=2$ after observing $\omega_1=0$. Thus, $P$ is left with two options following the report $r_1=1$: $(i)$ distort the efficient assignments by setting  $\hat{x}(1,r_2)=1$ for all $r_2$, thereby increasing $\hat{x}^*(1,0)=0$ to $\hat{x}(1,0)=1$; or $(ii)$ distort the baseline testing policy by setting $\sigma_2(1)=1>\sigma^*(1)=0$. Option $(i)$ requires $\gamma\geq\bar{\gamma}$ (i.e., the deviation in \autoref{def:deviation} is not profitable for $A$) to be incentive compatible,  and it is optimal when $\kappa$ is sufficiently high. Otherwise, option $(ii)$ is optimal. 

	\section{Discussion}	
	
Our model captures an informational friction in the problem of optimally assigning livers from deceased donors to patients in the waiting list. The state $\omega_t$ can be interpreted as the social cost of not allocating the liver to the patient, i.e., whether a patient must have priority for a transplant. The test is any procedure that reveals such social cost, which in practice is a function of the MELD score of all patients waiting. 

Patient testing costs $c$ represent, in a stylized manner, expenses such as travel to testing facilities and other private burdens not internalized by the principal, including physical or psychological discomfort and various logistical burdens. Principal costs $k$, in turn, comprise reimbursement expenditures 
as well as any portion of patient costs that the principal effectively internalizes.
Conceptually, the principal serves as a reduced-form representation of an allocation rule designer who also bears the social costs induced by testing, even when these costs are not directly borne by the agent.

Within this framework, the model provides a plausible rationale for the strategic timing of reports in deceased-donor liver allocation discussed in \autoref{sec:facts}. In fact, when the principal testing costs are intermediate, the baseline mechanism requires testing only at the end of the recertification window (\autoref{prop:baseline}). When this is the case and agent costs are low enough, \autoref{prop:gaming1} states that agents might find it advantageous to deviate from the baseline, test more often than advised, but report only the test results that increase their priority. This mirrors the empirical evidence in \cite{gordon2021chapter3}: results that raise a patient’s priority are typically reported before the deadline, while results that lower priority disproportionately accumulate near the recertification deadline.

Our model parsimoniously captures a plausible force behind the observed behavior and highlights a direction for mitigating the strategic timing of testing and disclosure within recertification groups (\autoref{prop:HDopt}) by allowing testing recommendations to depend on previously reported information. One interpretation of this result is that, while the current update schedule moves in the right direction, finer adjustments of recertification intervals in response to prior reports, such as shortening them after priority-increasing results, may help reduce strategic timing in testing. 

\subsection{Many agents}
The model analyzed above does not capture an important feature of the application under consideration: organs for transplantation are allocated among many patients. This omission is intentional. In \autoref{appen:manyagents}, we formally present an analogous model with a continuum of agents, and an aggregate assignment feasibility constraint. We assume that all agents are identical and follow the same strategies whenever the mechanism treats them symmetrically. We show that, when attention is restricted to a particular class of \textit{admissible} mechanisms, the problem of designing a mechanism in the many-agent setting reduces to the single-agent case. In other words, the problem admits a representative-agent formulation.

The admissibility requirements imposed on mechanisms are: anonymity (agents submitting identical reports receive the same testing recommendations and assignments), label-independence (policies do not depend on agents' labels, indices, or names), and null-insensitivity (no zero-measure subset of agents can be pivotal). 

\section{Conclusions}
	
	We studied the problem faced by a decision-maker who needs to incentivize (without the use of contingent payments) evidence acquisition and disclosure from an agent who is subject to moral hazard (i.e., the decision-maker does not observe his choice beyond what he reports). We do it in a framework with deterministic exogenous timing of choice, binary action and state spaces, and a fully revealing information acquisition technology.
	
	In this setting, we show that ignoring the agent's incentive concerns leads to  mechanisms that do not condition testing recommendations on specific test results. Moreover, decisions are solely based on payoff-relevant information. Once we take into account the agent's incentives, optimal testing policies tend to display a particular type of history dependence: they recommend further evidence acquisition when the current information supports the agent's favorite choice. Moreover, the optimal assignment may be contingent on payoff-irrelevant information.
	
	Interpreting the model in the context of deceased-donor liver transplantation, an informational friction in the MELD recertification process, arising from discretionary reporting between mandatory updates, can induce systematic  selection in the reporting of test results, just as the empirical evidence suggests. Our analysis indicates that result-dependent testing schedules may better align patient behavior with the designer’s objectives, thereby mitigating the non-disclosure of voluntarily acquired information. Exploring environments with longer time horizons, stochastic organ arrivals, and richer state and action spaces remains a promising avenue for future research.
		
	Finally, the substantive insights from our analysis highlight an important consideration for decision-makers who rely on delegated information acquisition. Ignoring agents' preferences can give rise to opportunistic behavior, but such conduct can be mitigated through carefully designed schedules. In particular, conditioning evidence requests on previously reported information may serve as a useful tool in other environments as well. 
	
	\pagebreak
	
	\section*{References}
	\bibliography{tests}

	\newpage
	
	\appendix 
	
	\section{Omitted Proofs}
	
	\begin{proof}[Proof of \autoref{prop:baseline}]
		We proceed by backward induction. At the end of period $2$ but before assignments are realized, efficient assignments deliver an expected payoff of 
		\[
		\max\{\mu_2(h_2),1-\mu_2(h_2)\}
		\]
		to $P$. 
		 	
		Regarding testing choices, given a fixed $r_1$, $P$ can test in $t=2$ and get to fully adjust the action to the state at a cost of $k$. Alternatively, she can hope to obtain a result for free (which occurs with probability $\pi$). Otherwise, she must base the decision solely on the information held at the beginning of the period, i.e., $\mu_2(r_1)$. Therefore, testing in $t=2$ is optimal if and only if
		\begin{align*}
		\mu_2(h_1)(1)+(1-\mu_2(h_1))(1)-k\geq&\\ \pi[\mu_2(h_1)(1)+(1-\mu_2(h_1))(1)]&+(1-\pi)\max\{\mu_2(h_1),(1-\mu_2(h_1))\}			
		\end{align*}
		
		or $\kappa\leq 1-\max\{\mu_2(r_1),1-\mu_2(r_1)\}$.\footnote{Recall that $\kappa=k/(1-\pi)$ is the \textit{effective} cost of testing for the principal.} If $\kappa>1/2$, this condition cannot be satisfied. If $\kappa=1/2$, testing is optimal only when $\mu_2(r_1)=1/2$. Finally, for $\kappa<1/2$, testing is optimal if and only if the belief is sufficiently close to $1/2$, i.e., $\mu_2(r_1)\in[\kappa,1-\kappa]$.
				
		Observe that $r_1 \in \{0,1,\O\}$. On one hand, when $r_1\in\{0,1\}$ we have that $\mu_2(r_1)\in\{\rho,1-\rho\}$. Thus, \[\max\{\mu_2(r_1),1-\mu_2(r_1)\}=\rho\] implies that testing in $t=2$ is optimal if and only if $\kappa\leq 1-\rho$. 
		
		On the other hand, if $r_1=\O$, the belief is equal to $\mu_2(\O)=\rho\mu_0+(1-\rho)(1-\mu_0)$. Note that $\mu_2(\O)\geq 1/2$ is equivalent to $\mu_0\geq 1/2$. Thus, if $\mu_0\geq 1/2$ then $\max\{\mu_2(\O),1-\mu_2(\O)\}=\mu_2(\O)$, implying that testing is optimal if and only if $\kappa\leq 1-\mu_2(\O)$. Similarly, if $\mu_0< 1/2$ then $\max\{\mu_2(\O),1-\mu_2(\O)\}=1-\mu_2(\O)$, thus testing is optimal if and only if $\kappa\leq \mu_2(\O)$. Hence, testing is optimal after $r_1=\O$ if and only if $\kappa\leq\min\{\mu_2(\O),1-\mu_2(\O)\}$.
		
		Note that the cut-off for testing is higher after $r_1=\O$ than after $r_1\in\{0,1\}$, since for all $\mu_0\in (0,1)$ we have $\min\{\mu_2(\O),1-\mu_2(\O)\}>1-\rho$.
		
		In conclusion, if $\kappa\in(1-\rho,\min\{\mu_2(\O),1-\mu_2(\O)\}]$, it is optimal to test in $t=2$ if and only if no test result was obtained in $t=1$. Moreover, if 
		$\kappa\leq 1-\rho$ it is always optimal to test in $t=2$, while if $\kappa>\min\{\mu_2(\O),1-\mu_2(\O)\}$ it is never optimal to test in $t=2$ (regardless of the history).
		
		In all cases, it is never optimal for $P$ to test in $t=1$.		
	\end{proof}
	
		\begin{proof}[Proof of \autoref{lem:testObedt2}]
			Let $m^*:=(\sigma^*,\hat{x}^*)$ be the baseline mechanism.  Firstly, in the disclosure stage, in any $t$ and $h_t$ with $\sigma_t^*(h_t)=0$ it is optimal for $A$ to set $r^{m^*}_t(h_t,e_t,\omega_t)=\mathbb{1}[\omega_t=1]$. 
		
		Secondly, in the testing stage, $r_1=1$ implies: $(i)$ $\sigma^*_1(1)=0$, and $(ii)$ $\hat{x}^*(1,\O)=1$. Hence, $e^{m^*}_2(1)=0$ whenever $c>0$. Because under the corresponding interim belief, $\mu_2=\rho>1/2$, $A$ gets his preferred allocation and any unfavorable free evidence can be concealed  (because a test result is not requested in this situation).

		Moreover, $r_1=0$ implies: $(i)$ $\sigma_2^*(0)=0$, and $(ii)$ $\hat{x}^*(0,\O)=0$. Therefore, $e^{m^*}_2(0) =1$ if and only if	
	\[
	-c+[(1-\rho)(1)+\rho (0)]\geq \pi [(1-\rho)(1)+\rho (0)]+(1-\pi)(0),
	\]
	equivalently $\gamma\leq 1-\rho$. Thus, $A$ follows $\sigma_2^*(1)=\sigma_2^*(0)=0$ if and only if  $\gamma>1-\rho$.

	Furthermore, $r_1=\O$ implies: $(i)$ $\sigma_2^*(\O)=1$, and $(ii)$ $\hat{x}^*(\O,\O)=\hat{x}^*(\O,0)=0$. Hence, $A$ is indifferent between $r_2=0$ and $r_2=\O$. Thus, $e^{m^*}_2(\O)=1$ if and only if $\gamma\leq \mu_2$, where $\mu_2$ denotes $A$'s private belief at the beginning of period 2. Note that $r_1=\O$ can be the outcome of either absence of evidence or concealed evidence. In the first case, $\mu_2=\mu_2(\O)$; in the second, $\mu_2=1-\rho$. Because $\mu_2(\O)>1-\rho$,  we have three cases: First, if $\gamma\leq 1-\rho$,  $e^{m^*}_2(\O)=1$. Second, if $\gamma>\mu_2(\O)$, $e^{m^*}_2(\O)=0$. Finally, for intermediate testing costs $\gamma\in(1-\rho,\mu_2(\O)]$, $e^{m^*}_2(\O)=1$ if and only if no result went unreported in the previous period.
	
	\end{proof}
	
	\begin{proof}[Proof of \autoref{prop:gaming1}]

Recall that $m^*=(\sigma^*,\hat{x}^*)$ denotes the baseline mechanism. If $e^{m^*}_1=\sigma^*_1=0$, $A$ can expect a free test is $t=1$ with probability $\pi$. The disclosure of the free test result is given by $r_1^{m^*}(1,\omega_1)=\mathbb{1}[\omega_1=1]$. When $r_1=1$ we have that $\hat{x}^*(1,\O)=1$, thefore $A$ receives his preferred allocation with probability $1$ after $r_1=1$. Hence, $e^{m^*}_2(1)=0$. 

In contrast, by \autoref{lem:testObedt2},when $\gamma\in(1-\rho,\mu_2(\O)]$, $A$ is not willing to follow $\sigma_2(\O)=1$ after observing $\omega_1=0$. Thus, after observing $\omega_1=0$ and reporting $r_1=\O$, $A$ expects an allocation of 1 if and only if a free test reveals $\omega_2=1$, which occurs, from $A$'s point of view, with probability $\pi(1-\rho)$. If $r_1=\O$ is a result of no free test, which occurs with probability $1-\pi$, $A$ will be willing to follow $\sigma_2(\O)=1$ and can expected a payoff of $-c+\mu_2(\O)$, bacause $\hat{x}^*(\O,r_2)=\mathbb{1}[r_2=1]$. Thus if $e^{m^*}_1=\sigma^*_1=0$, $A$ can expect a payoff of 
\[-c(1-\pi)+\pi[\mu_0+(1-\mu_0)\pi(1-\rho)]+(1-\pi)\mu_2(\O).\]

As a result, the $A$ (strictly) prefers the deviation in \autoref{def:deviation} if and only if \[-c+\mu_0+(1-\mu_0)\pi(1-\rho)>-c(1-\pi)+\pi[\mu_0+(1-\mu_0)\pi(1-\rho)]+(1-\pi)\mu_2(\O).\] Re-arranging this expression and using $\mu_2(\O)=\mu_0\rho+(1-\mu_0)(1-\rho)$ we get the desired result.

	\end{proof}

	\begin{proof}[Proof of \autoref{lem:implemEffAssign}]
	Let $m=(\sigma^{RD1},\hat{x}^*)$ be the mechanism that pairs the testing policy $\sigma^{RD1}$ with the efficient assignments $\hat{x}^*$. 	We know that $e_2^{m}(0)=0$ and $e_2^{m}(1)=1$ if and only if  $\gamma\in(1-\rho,\rho]$.

	Note that $\sigma^{RD1}(\O)=1$. Thus, $e^m_2(\O)$ if $-c+\mu_2\geq \pi\mu_2$ equivalently $\gamma\leq\mu_2$ (where $\mu_2$ is $A$'s private belief at the beginning of period 2). Therefore, if $r_1=\O$ is the result of withheld information then $e_2^m(\O)=0$ when $\gamma\in(1-\rho,\rho]$. But if $r_1=\O$ is the result of no available information, then $e_2^m(\O)=1$ if $\gamma\leq \mu_2(\O)$. A a result, provided no information was withheld in $t=1$, $A$ follows all recommendations in the second period if and only if $\gamma\in(1-\rho,\mu_2(\O)]$.
		
	Next, we check whether $r_1=\omega_1$ is on $A$'s interest. If $A$ observes $\omega_1=0$, reporting $r_1=0$ leads an expected payoff of $\pi(1-\rho)$ (since no test will be requested, which he follows as long as $\gamma>1-\rho$, and unless $\omega_2=1$ is observed for free the allocation is $0$). When  $\gamma\in(1-\rho,\rho]$ reporting $r_1=\O$ yields an expected payoff of $\pi(1-\rho)$, since he will not follow the recommendation to test and the only scenario in which the allocation is not zero is after observing $\omega_2=1$ for free. Thus, $A$ is willing to truthfully report $r_1=0$.
		
	If $A$ observes $\omega_1=1$ and reports $r_1=1$ he can expect a payoff of $-c+\rho$. If he instead reports $r_1=\O$, he gets $-c+\rho$ (since a test result is requested and the probability that the test result is high equals $\rho$ given the observation $\omega_1=1$). Thus, $A$ is also willing to truthfully report $r_1=1$.

	Thus, when $\gamma\in(1-\rho,\mu_2(\O)]$ we have that $A$ fully reports any information in either period and follows all recommendations in $t=2$. 
		
	We are only left with the testing decision in the first period when $\gamma\in(1-\rho,\mu_2(\O)]$. 
	
	In $t=1$ $A$ tests if
	\[-c+\mu_0 (-c+\rho)+(1-\mu_0)\pi(1-\rho)\geq \pi [\mu_0 (-c+\rho)+(1-\mu_0)\pi(1-\rho)]+(1-\pi)[-c+\mu_2(\O)],\]
	equivalently,
	\[\gamma\leq\frac{-(1-\pi)(1-\rho)(1-\mu_0)}{\pi+\mu_0(1-\pi)},\]
		
	which can never hold since the left-hand side is strictly positive and the right is strictly negative. Therefore, obedience to $\sigma_1=0$ is also guaranteed.
	\end{proof}

\begin{proof}[Proof of \autoref{lem:lowerbound}]
	
	If $\sigma_2(1)=0$ then disclosure of $\omega_2=0$ after $r_1=1$ requires $\hat{x}(1,0)\geq \hat{x}(1,\O)$ which forces either $\hat{x}(1,0)$  or $\hat{x}(1,\O)$ away from their efficient levels (0 and 1, respectively). 
	
	Moreover, if $\sigma_2(\O)=0$, disclosure of $\omega_2=0$ after $r_1=\O$ requires $\hat{x}(\O,0)\geq \hat{x}(\O,\O)$. Since $\mu_0>1/2$ is equivalent to $\mu_2(\O)>1/2$, the assignments need to be distorted from their efficient levels (0 and 1, respectively). If $\mu_2(\O)<1/2$, we can still have the efficient assignments  $\hat{x}(\O,0)=\hat{x}(\O,\O)=0$. However, $A$ would not follow $\sigma_2(\O)=0$. By testing $A$ gets an expected payoff of $-c+\mu_2(\O)$, while by not testing $A$ gets $\pi \mu_2(\O)$. Hence, $A$ follows $\sigma_2(\O)=0$ if and only if $\gamma\geq \mu_2(\O)$, which cannot hold when  $\gamma\in(1-\rho,\mu_2(\O))$. 	
\end{proof}

	\begin{proof}[Proof of \autoref{prop:revelation}]
	First, note that for a given mechanism, $A$ faces a decision problem. Whenever $A$ is indifferent between two actions, we adopt the tie-breaking rule that he follows the test recommendation and discloses the test result. Thus, The actions compatible with $A$'s rationality are uniquely pinned down by the mechanism, under this tie-breaking.
	
	Consider an alternative mechanism $M''=(\hat{x}'',\sigma'')$ such that
	\begin{align*}
		\hat{x}''(h_t,r_t):=
		\left\{ 
		\begin{tabular}{c c c}
			$\hat{x}(h_t,\omega_t)$ & if & $r^{M}_t(h_t,e_t,\omega_t)=\omega_t$,\\
			$\hat{x}(h_t,\O)$ & if & $r^{M}_t(h_t,e_t,\omega_t)=\O.$
		\end{tabular}\right.
	\end{align*}
	and 
	\begin{align*}
		\sigma''_t(h_t):=
		\left\{ 
		\begin{tabular}{c c c}
			$\sigma_t(h_t)$ & if & $e^M_t(h_t)=\sigma_t(h_t),$ \\
			$1-\sigma_t(h_t)$ & otherwise. &
		\end{tabular}\right.
	\end{align*}

	On one hand, the alternative assignment policy gives an agent with test result $\omega_t$ the same allocation than the original policy, regardless of the report that $A$ was optimally choosing under the original policy. On the other hand, the alternative testing policy recommends to test at given history if and only if $A$ was testing at that same history under the original policy (regardless of what the original recommendation was). Hence, the mechanism $M''$ keeps the same on-path actions and outcomes as $M$. 
		
	Now, we adjust $M''$ such that any deviation from the agent is punished with his least preferred action without requiring any additional tests, effectively ending the relationship between the parties. Consider the mechanism $M'=(x',\sigma')$ such that
	\begin{align*}
		\hat{x}'(h_t,r_t):=
		\left\{ 
		\begin{tabular}{c c c}
			$\hat{x}''_t(h_t,\omega_t)$ & if & $r^{M''}_{t'}(h_{t'},e_{t'},\omega_{t'})=\omega_{t'} \ \forall t'\leq t \text{ with } \sigma''_{t'}(h_{t'})=1$\\
			$0$ & otherwise
		\end{tabular}\right.
	\end{align*}
	and 
	\begin{align*}
		\sigma'_t(h_t):=
		\left\{ 
		\begin{tabular}{c c c}
			$\sigma''_t(h_t)$ & if & $r^{M''}_{t'}(h_{t'},e_{t'},\omega_{t'})=\omega_{t'} \ \forall t'\leq t \text{ with } \sigma''_{t'}(h_{t'})=1$\\
			$0$ & otherwise. &
		\end{tabular}\right.
	\end{align*}
	On-path $M'$ coincides with $M''$. Moreover, they only make $A$ more willing to follow the recommendations of the mechanism. Thus, if there were no profitable deviations for $A$ under $M''$, there are none under $M'$.
	\end{proof}
	
	\begin{proof}[Proof of \autoref{lem:FDOBt2}]
		Since any $\mu_2\in(0,1)$ we have that $[FD,r_1,\omega_2=1]$ is equivalent to $\mu_2 x(r_1,1)\geq \mu_2 \hat{x}(r_1,\O)$. Likewise,  $[FD,r_1,\omega_2=0]$ is equivalent to $(1-\mu_2) \hat{x}(r_1,0)\geq (1-\mu_2) \hat{x}(r_1,\O)$. Adding the two we get $\E_{\mu_2}[\hat{x}(r_1,\tilde{r}_2)]\geq \hat{x}(r_1,\O)$. $[OB,\sigma_2(r_1)=0]$ provides an upper bound on $\E_{\mu_2}[\hat{x}(r_1,\tilde{r}_2)]-\hat{x}(r_1,\O)$.
	\end{proof}
	
		\begin{proof}[Proof of \autoref{lem:FDOBt2c0}]
		$c=0$ is equivalent to $\gamma=0$. By lemma \ref{lem:FDOBt2} we have $\E_{\mu_2}[\hat{x}(r_1,\tilde{r}_2)]=\hat{x}(r_1,\O)$. Suppose  $\hat{x}(r_1,1)>\hat{x}(r_1,0)$. Then, $\hat{x}(r_1,\O)>\hat{x}(r_1,0)$ which violates  $[FD,r_1,\omega_2=0]$.
		Similarly, if  $\hat{x}(r_1,1)<\hat{x}(r_1,0)$ then $\hat{x}(r_1,\O)>\hat{x}(r_1,1)$ which violates  $[FD,r_1,\omega_2=1]$. Hence, the only possibility left is $\hat{x}(r_1,0)=\hat{x}(r_1,1)=\hat{x}(r_1,\O)$.	
	\end{proof}	
	
		\begin{proof}[Proof of \autoref{lem:FBOBt1}]
		Given that $\mu_0\in(0,1)$ we have that $[FD,\omega_1=1]$ is equivalent to $\mu_0 v(1,\rho)\geq \mu_0 v(\O,\rho)$ and $[FD,\omega_1=0]$ is equivalent to $(1-\mu_0) v(0,1-\rho)\geq (1-\mu_0) v(\O,1-\rho)$. Adding both expressions we get $\E_{\mu_0}[v(\tilde{r}_1,\tilde{\mu}_2)]-v(\O,\tilde{\mu}_2))]\geq 0$. $[OB,\sigma_1=0]$ provides the upper bound.
	\end{proof}

		\begin{lemma}
			\label{lem:optempt}
			Assume $\kappa\leq \min\{\mu_2(\O),1-\mu_2(\O)\}$ and $\gamma\leq \mu_2(\O)$. Then, the optimal mechanism must have $\sigma_2(\O)=1$ and $\hat{x}(\O,r_2)=\mathbb{1}[r_2=1]$.
		\end{lemma}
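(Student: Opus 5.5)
The argument is an exchange argument confined to the branch $r_1=\O$. I will show that any IC mechanism with $\sigma_2(\O)=0$, or with $\sigma_2(\O)=1$ but $\hat{x}(\O,\cdot)\neq\mathbb{1}[r_2=1]$, is weakly dominated. It is replaced by a mechanism that coincides with it after $r_1\in\{0,1\}$ and sets $\sigma_2(\O)=1$, $\hat{x}(\O,1)=1$, $\hat{x}(\O,0)=\hat{x}(\O,\O)=0$. The work is then to check that the replacement remains IC and does not lower $W$.

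\textbf{Step 1 (payoff on the $\O$ branch).} Conditional on $r_1=\O$, with belief $\mu_2(\O)$, the principal's continuation payoff under $\sigma_2(\O)=1$ with forcing assignments $\hat{x}(\O,r_2)=\mathbb{1}[r_2=1]$ equals $1-k$. This is the full-information payoff net of the testing cost.

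Under $\sigma_2(\O)=0$ her payoff is at most $\pi+(1-\pi)\max\{\mu_2(\O),1-\mu_2(\O)\}$. Even this bound requires efficient assignments after $(\O,0)$, $(\O,1)$ and $(\O,\O)$. By the computation in the proof of \autoref{prop:baseline}, the inequality $1-k\geq$ that bound is exactly $\kappa\leq\min\{\mu_2(\O),1-\mu_2(\O)\}$.

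With $\sigma_2(\O)=1$, forcing already gives $\hat{x}(\O,\O)=0$. The choice $\hat{x}(\O,r_2)=r_2$ is then the pointwise best among assignments after $r_2\in\{0,1\}$. So on the $\O$ branch the proposed continuation is optimal for $P$ ignoring incentives.

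\textbf{Step 2 (second-period IC of the replacement).} $[OB,\sigma_2(\O)=1]$ reads $\gamma\leq\E_{\mu_2(\O)}[\hat{x}(\O,\tilde{r}_2)]=\mu_2(\O)$, which holds by hypothesis. Full disclosure at $t=2$ is implied by \autoref{lem:FDfree}.

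\textbf{Step 3 (first-period constraints, the main obstacle).} Changing the $\O$ continuation alters $v(\O,\mu_2)$ for $\mu_2\in\{\rho,1-\rho\}$. It therefore affects $[FD,\omega_1]$ and, when $\sigma_1=0$, $[OB,\sigma_1=0]$. Under the new continuation, an agent with private belief $\mu_2$ who reports $\O$ obtains
\[
v(\O,\mu_2)=\max\{-c+\mu_2,\ \pi\mu_2\}.
\]
The first term comes from obeying the test request, the second from skipping it and relying on free evidence. In the IC check I must allow the agent's optimal continuation after a deviation.

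I then compare this with the original mechanism's $v(\O,\mu_2)$. I would argue that the replacement makes withholding no more attractive. First, any original continuation with $\sigma_2(\O)=0$ satisfying the $[OB]$/$[FD]$ bounds of \autoref{lem:FDOBt2} already gives the agent at least $\hat{x}(\O,\O)$. Second, $\max\{-c+\mu_2,\pi\mu_2\}\leq\mu_2$ is a low continuation value.

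This comparison is not automatically monotone. If the original mechanism had $\hat{x}(\O,\cdot)\equiv 0$, the new value could be larger. So the first route I try is a direct case split on the original $\sigma_2(\O)$ and $\hat{x}(\O,\cdot)$.

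In each case where $v(\O,\cdot)$ rises, I would either show that $[FD,\omega_1]$ still holds, or modify the $r_1\in\{0,1\}$ continuations so that $v(r_1,\cdot)$ rises as well. The latter can be done, for instance, by raising $\hat{x}(r_1,\O)$ or $\hat{x}(r_1,0)$ toward 1, and I would then show that $P$'s loss from this is dominated by the gain from Step 1.

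For $[OB,\sigma_1=0]$, I would use the bound in \autoref{lem:FBOBt1}. Because $\sigma_2(\O)=1$ is forcing and $\gamma\leq\mu_2(\O)$, the gap $\E_{\mu_0}[v(\tilde{r}_1)]-\E_{\mu_0}[v(\O)]$ can be computed explicitly and shown to be at most $\gamma$, mirroring the final step of the proof of \autoref{lem:implemEffAssign}.

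If the case $\sigma_1=1$ arises, I would handle it separately. There the $\O$ branch is reached only after a forcing deviation, and it can be set to the proposed form at no cost.
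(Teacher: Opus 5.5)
Your Steps 1 and 2 are, in substance, the paper's entire proof. The paper checks $[OB,\sigma_2(\O)=1]$ through $\gamma\le\mu_2(\O)$, observes that making $\sigma_2(\O)=0$ obedient forces a distortion of $\hat{x}(\O,\cdot)$ with no efficiency gain, and invokes the $\kappa$-comparison of \autoref{prop:baseline}. It never examines the first-period constraints. You correctly identify those constraints as the real issue, but Step 3 is a plan rather than an argument. The exchange it rests on (freeze the continuations after $r_1\in\{0,1\}$, swap only the $\O$ branch) does not preserve incentive compatibility.

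You already see that the swap can break $[FD,\omega_1=0]$, and the promised ``loss dominated by gain'' comparison is never carried out. Your claim that $[OB,\sigma_1=0]$ survives is false. The mechanism with $\sigma_1=0$, $\sigma_2\equiv 0$, $\hat{x}\equiv 1$ is IC. After your swap, an uninformed agent gets $\pi+(1-\pi)(\mu_2(\O)-c)$ by obeying and $1-c$ by testing at $t=1$. So he deviates whenever $1-\mu_2(\O)>\pi\gamma$, e.g.\ for every $\mu_0\le 1/2$. The computation in \autoref{lem:implemEffAssign} that you want to mirror relies on the specific continuations of $\sigma^{RD1}$ after $r_1\in\{0,1\}$, not on arbitrary ones. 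Likewise, under $\sigma_1=1$ the $\O$ branch is not free to reset. Its value is what the agent gets from skipping the mandatory first test, so replacing the punishment by $\mu_2(\O)-c$ tightens $[OB,\sigma_1=1]$.

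What closes the gap is a global comparison instead of a local swap. First, at the knife-edge $\gamma=\mu_2(\O)$, not testing after $\O$ with the same assignments is IC and yields the same function $v(\O,\cdot)$, so it can be exchanged for the lemma's continuation at no loss. Every other IC continuation after $\O$ gives $P$ at most $\max\{\mu_2(\O),1-\mu_2(\O)\}$ on that branch; this follows from a short enumeration using forcing and \autoref{lem:FDOBt2}. Bounding the branches $r_1\in\{0,1\}$ by their first-best value $\max\{1-k,\pi+(1-\pi)\rho\}$, any such mechanism with $\sigma_1=0$ earns at most $\pi\max\{1-k,\pi+(1-\pi)\rho\}+(1-\pi)\max\{\mu_2(\O),1-\mu_2(\O)\}$.

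An explicit IC mechanism with the lemma's $\O$ branch beats this bound:
\begin{itemize}
\item if $\gamma\le 1-\rho$, test after every $r_1$ with efficient assignments;
\item if $\gamma>1-\rho$, use $(\sigma^{RD1},\hat{x}^*)$, which is IC by \autoref{lem:implemEffAssign}.
\end{itemize}
Using $k\le(1-\pi)\min\{\mu_2(\O),1-\mu_2(\O)\}$ and $\min\{\mu_2(\O),1-\mu_2(\O)\}>1-\rho$, the margin is a strictly positive multiple of $\pi(1-\pi)$. For instance, it is at least $\pi(1-\pi)(1-\rho)$ when $\gamma\le 1-\rho<\kappa$. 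The case $\sigma_1=1$ must then be excluded by a similar direct payoff comparison.
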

	
	\begin{proof}[Proof of \autoref{lem:optempt}]
	 First, note that it is IC. $[OB,\sigma_2(\O)=1]$ requires $\gamma\leq \mu_2(\O)$, which is assumed to hold. A change of the policy to $\sigma_2(\O)=0$ violates $[OB,\sigma_2(\O)=0]$ when paired with the same assignment policy. In order to make $[OB,\sigma_2(\O)=0]$ hold, there are two options. The first one is to set $\hat{x}(\O,\O)=1$, while keeping the other assignments fixed. The second one is to set  $\hat{x}(\O,1)=0$, while keeping the other assignments fixed. Neither case represents an increase in allocational efficiency (since when $\sigma_2(\O)=1$ the report $r_2=\O$ is not sent with positive probability). Finally, $\kappa\leq \min\{\mu_2(\O),1-\mu_2(\O)\}$  implies that testing after $r_1=\O$ is optimal.
	\end{proof}
	
	\begin{proof}[Proof of \autoref{prop:HDopt}]
		By \autoref{lem:optempt}, the optimal mechanism has $\sigma_2(\O)=1$ and $\hat{x}(\O,r_2)=\mathbb{1}[r_2=1]$. 
		
		Full disclosure in $t=2$ required $\hat{x}(r_1,0)\geq \hat{x}(r_1,\O)$ for all $r_1$ with $\sigma_2(r_1)=0$. Suppose these constraints bind for $r_1\in\{0,1\}$.\footnote{Setting $\hat{x}(r_1,0)>\hat{x}(r_1,\O)$ can only decrease $W$. Moreover, they only make $[OB,\sigma_2(r_1)=0]$ harder to be satisfied.} $[OB,\sigma_2(r_1)=0]$ becomes $\gamma\geq \mu_2[\hat{x}(r_1,1)-\hat{x}(r_1,0)]$. Thus, if  $\hat{x}(r_1,1)=1$ and $\hat{x}(r_1,0)=0$ (as in the efficient assignments), $[OB,\sigma_2(r_1)=0]$ becomes $\gamma\geq \mu_2$. Since we assumed $\gamma>1-\rho$ it follows that $[OB,\sigma_2(0)=0]$ is satisfied. Moreover, $v(0)=v(0,1-\rho)=\pi(1-\rho)>1-\rho-c=v(\O,1-\rho)$, thus $[FD,\omega_1=0]$ is satisfied.  However, since $\gamma\leq \mu_2(\O)< \rho$, this recommendation is not incentive compatible with those assignments. $P$ could decrease $\hat{x}(1,1)$ from 1 to 0. Then $[OB,\sigma_2(1)=0]$ becomes $\gamma\geq0$, which is satisfied. However, this implies that $v(1)=0<-c+\rho=v(\O,\rho)$, which violates $[FD,\omega_1=1]$. Therefore, this adjustment does not make the mechanism incentive compatible. There are two ways to recover incentive compatibility:
		
		$(i)$  Set $\hat{x}(1,r_2)=1$ for all $r_2$. This clearly satisfies obedience and full disclosure in $t=2$. This also satisfies $[FD,\omega_1=1]$ since $v(1)=1>-c+\rho=v(\O,\rho)$. Given full disclosure $[OB,\sigma_1=0]$ becomes $\mu_0 (1)+(1-\mu_0)\pi(1-\rho)-(\mu_2(\O)-c)\leq \gamma$, equivalently, $\gamma\geq \bar{\gamma}=(1-\rho)[\mu_0-(1-\mu_0)(1-\pi)]/\pi$.\footnote{Note that $\bar{\gamma}$ appeared in \autoref{prop:gaming1} as the threhold for which if $\gamma<\bar{\gamma}$ then $A$ strictly prefers to deviate from the baseline testing recommendations.} 
		
		The expected payoff under this mechanism equals 
		\begin{align*}
		%&\pi^2\mu_0\rho+\pi(1-\pi)\mu_0\rho\\
		%&+\pi^2(1-\mu_0)(1-\rho)+\pi^2(1-\mu_0)\rho+\pi(1-\pi)(1-\mu_0)\rho\\
		%&+(1-\pi)\mu_2(\O)(1-k)+(1-\pi)(1-\mu_2(\O))(1-k)\\
		%&=\pi\mu_0\rho+\pi^2(1-\mu_0)+\pi(1-\pi)(1-\mu_0)\rho+(1-\pi)(1-k)=\\
		&\pi\mu_0\rho+\pi(1-\mu_0)[\pi+(1-\pi)\rho]+(1-\pi)(1-k).
		\end{align*}

		$(ii)$ Change the testing recommendation to $\sigma_2(1)=1$ while maintaining the efficient assignments. This change does not affect any of the other constraints. 
		
		The expected payoff under this mechanism equals 
		\begin{align*}
	%&\pi\mu_0\rho(1-k)+\pi\mu_0(1-\rho)(1-k)\\
	%&+\pi^2(1-\mu_0)(1-\rho)+\pi^2(1-\mu_0)\rho+\pi(1-\pi)(1-\mu_0)\rho\\
	%&+(1-\pi)\mu_2(\O)(1-k)+(1-\pi)(1-\mu_2(\O))(1-k)\\
	%&=\pi\mu_0(1-k)+\pi(1-\mu_0)[\pi+(1-\pi)\rho]+(1-\pi)(1-k)=\\
	&(1-k)[\pi\mu_0+1-\pi]+\pi(1-\mu_0)[\pi+(1-\pi)\rho].
		\end{align*}

		Therefore, the optimal mechanism option $(i)$ requires $\gamma\geq \bar{\gamma}$ and it is strictly preferred to option $(ii)$ if and only if  $\kappa> \bar{\kappa}:=(1-\rho)/(1-\pi)$. Otherwise option $(ii)$ is optimal.

		Observe that $\hat{x}(r_1,r_2)=\mathbb{1}[r_2=1]$ coincides with the efficient assignment for all reports that are sent with positive probability under the optimal mechanism. This follows because the optimal assignment policy only distorts $\hat{x}^*(1,\O)$ and $\hat{x}^*(\O,\O)$. Additionally,  $\sigma_2(\O)=\sigma_2(1)=1$ satisfies obedience and full disclosure when assumptions \ref{assump:interKappa} and \ref{assump:interGamma} hold. Therefore, no mechanism with $\sigma_1=0$ can deliver strictly higher payoffs to $P$.

		Finally, to study the optimality of $\sigma_1=0$, note that only $r_1=0$ leads to no testing. Therefore, the only reason to test in $t=1$ would be to reduce potential distortions following that report. However, there are no distortions after $r_1=0$, since \[\mu_2(r_1=0,r_2=\O)=1-\rho<1/2.\]
		Thus, testing at $t=1$ can only increase costs and produce no gains in terms of efficiency, which implies that $\sigma_1=0$ is optimal. 				
	\end{proof}
	
	\pagebreak
	
\section{Many agents}
\label{appen:manyagents}

\renewcommand{\thedefinition}{\Alph{section}.\arabic{definition}}
\renewcommand{\theassumption}{\Alph{section}.\arabic{assumption}}
\renewcommand{\theproposition}{\Alph{section}.\arabic{proposition}}
\renewcommand{\thelemma}{\Alph{section}.\arabic{lemma}}
\renewcommand{\theequation}{\Alph{section}.\arabic{equation}}

\setcounter{equation}{0}
\setcounter{definition}{0}
\setcounter{assumption}{0}
\setcounter{proposition}{0}
\setcounter{lemma}{0}

\subsection{Model}
Let $N:=[0,1]$ be the set of agents, endowed with the Borel sigma-algebra $\mathcal{B}(N)$ and the Lebesgue measure $\lambda$. Let  $i\in N$ denote a typical agent. 

There is a measurable set $S$ of identical items to be allocated among the agents. We assume that the supply of items is not enough to serve all agents, i.e., $0<\lambda(S)<\lambda(N)=1$. At the end of the second period, the principal distributes the mass $\lambda(S)$ of items among the agents.  Formally, an \textit{assignment} $x:N\to\{0,1\}$ is a measurable function for which the condition
\begin{align}
	\label{condi:aggregate}
	\int_Nx(i)d\lambda(i)\leq\lambda(S)
\end{align}

is satisfied.\footnote{Equivalently, an \textit{assignment} is a set $N^{priority}\in\mathcal{B}(N)$ such that $\lambda(N^{priority})\leq\lambda(S)$. An agent $i$ gets a fraction of the item if and only if $i\in N^{priority}$.}

In period $t$, the state is given by $\omega_t:N\to \{0,1\}$ where $\omega_t(i)\in\{0,1\}$ denotes the state for agent $i$ in that period. Each agent's states are independently and identically distributed. The typical state for an agent $i$ in period $t$ is denoted by $\omega$.  
The state for each agent evolves independently with transition probabilities given by $	\Pr[\tilde{\omega}_{t+1}(i) = \omega \mid \tilde{\omega}_{t}(i) = \omega]=\rho$. Moreover, the common prior belief about the initial state for each agent is given by  $\mu_0\in(0,1)$.

Regarding tests, we assume that all agents face the same testing cost $c$ and each test generates a cost $k$ to the principal. If $i$ tests in $t$ he privately observes $\omega_t(i)$. If agent $i$ decides not to test in $t$, he can still observe his own state at $t$ with probability $\pi\in(0,1)$. Let $e_t(i)\in\{0,1\}$ denote $i$'s testing decision in period $t$. If $i$ observes no test result in $t$ then $r_t(i)=\O$, if he has observed that $\omega_t(i)=\omega$ then $r_t(i)\in\{\O,\omega\}$.  Testing and reporting decisions are made simultaneously and privately by each agent. 

The \textit{ex-post} payoffs are as follows. Given an assignment $x$ and testing decisions $(e_1(i),e_2(i))$ for each agent $i$, $P$ receives
\[\int_N\mathbb{1} [\omega_2(i)=x(i)]d\lambda(i)-k\int_N [e_1(i)+e_2(i)]d\lambda (i),\]
while each agent $i$ receives
\[x(i)-c[e_1(i)+e_2(i)].\]

We assume that the report sent by agent $i$ in period $t$, $r_t(i)\in\{\O,0,1\}$, is only observed by $i$ and $P$. Therefore, a \textit{private history for agent $i$} at the end of period $t$ corresponds to $h_t(i):=(r_1(i),\ldots,r_t(i))$. Let $\mathcal{H}_t:=\{\O,0,1\}^t$ denote the set of possible private histories. While a \textit{collective history} for $P$ at the end of $t$, denoted $h_t:=\{h_t(i)\}_{i\in N}$, collects the private histories for all agents.

The principal commits to a mechanism at the beginning of the first period. A mechanism $m:=(\sigma,\hat{x})$ is the pairing of a \textit{testing policy} and an \textit{assignment policy} defined as follows:

\begin{itemize}
	\item A \textit{testing policy} $\sigma:=\{ (\sigma_1(i),\sigma_2(i))   \}_{i\in N}$ gives a testing recommendation to each agent in each period as a function of the collective history. In particular, $\sigma_1(i)\in\{0,1\}$ and $\sigma_2(i)$ maps $h_1$ into $\{0,1\}$. Let $\sigma_2(i;h_1)$ denote the testing recommendation for agent $i$ when the public history at the end of  the first period is $h_1$.
	
	\item 	An \textit{assignment policy} $\hat{x}$ maps the collective history at the end of period $2$, $h_2$, to an assignment $x$.  Let $\hat{x}(i;h_2)$ denote the assignment received by $i$ when the collective history at the end of the second period is $h_2$. We require $\hat{x}(i;h_2)$  to be measurable in $i$.
	
\end{itemize}

\subsection{Additional assumptions}

Having described the setting, we introduce additional assumptions necessary to rule out pathological cases and to properly address the issues arising from having a continuum of agents.

First, we restrict attention to a particular class of the mechanisms. Motivated by the fact that all agents are \textit{ex-ante} identical, it is natural to focus on mechanisms that treats all agents symmetrically. 	In other words, we impose that mechanisms can only differentiate agents by their reports. 	

\begin{definition}[Anonymous mechanisms]
	A mechanism is said to be \textit{anonymous} if for all  $i,j\in N$, all $h_1$, and all $h_2$:
	\begin{itemize}
		\item $\sigma_1(i)=\sigma_1\in\{0,1\}$,
		\item  $r_1(i)=r_1(j)$ implies $\sigma_2(i;h_1)=\sigma_2(j;h_1)$, and
		\item $(r_1(i),r_2(i))=(r_1(j),r_2(j))$ implies $\hat{x}(i;h_2)=\hat{x}(j;h_2)$.
	\end{itemize}
\end{definition}

Second, we interpret the indexes on the set of agents as carrying no meaning. In other words, changing the indexes of the agents should be inconsequential for the mechanism. In order to formalize this  \textit{desideratum},  define a relabeling $\phi:N\to N$ as a measurable-bijection that preserves the measure of every  subset, i.e., $\lambda(\phi^{-1}(B))=\lambda(B)$ for all $B\in \mathcal{B}(N)$.\ Let $\Phi$ be the set of all such bijections. For a \textit{collective} history $h_t$ and relabeling $\phi$ define the \textit{private} history for the ``new" agent $i$ as $(\phi\cdot h_t)(i):=h_t(\phi^{-1}(i))$.

\begin{definition}[Label-invariant mechanisms]
	A mechanism is said to be label-invanriant if for every $\phi\in\Phi$ and every collective history $h$, the following equalities hold for $\lambda$-a.e. agent $i\in N$:
	\begin{itemize}
		\item $\sigma_2(\phi(i);\phi\cdot h_1)=\sigma_2(i;h_1)$, and
		\item $\hat{x}(\phi(i);\phi\cdot h_2)=\hat{x}(i;h_2)$.
	\end{itemize}
\end{definition}
Note that $\sigma_1$ is trivially label-invariant under anonymity.

Third, we focus on mechanisms fow which no $\lambda$-measure zero subset of agents is pivotal. 

\begin{definition}[Null-insensitive mechanisms]
	A mechanism is said to be null-insensitive if for every two \textit{collective} histories $h_t$ and $h'_t$ such that 
	$\lambda(\{i\in N:h_t(i)\neq h'_t(i)\})=0$  and for $\lambda$-.a.e. $i\in N$:
	\begin{itemize}
		\item $\sigma_2(i;h_1)=\sigma_2(i;h'_1)$, and 
		\item $\hat{x}(i;h_2)=\hat{x}(i;h'_2)$. 
	\end{itemize}
\end{definition}
The following assumption is maintained.
\begin{assumption}[Admissible mechanisms]
	\label{assump:admissibility}
	Mechanisms are required to be anonymous, label-invariant, and null-insensitive. 
\end{assumption}

Additionally, since we are restricting the principal to use symmetric mechanisms. it is natural to focus on the case in which almost every agent employ the same testing and reporting strategies. Formally, 
\begin{assumption}[Symmetric agents' strategies]
	\label{assump:symmetric}
	For $\lambda$-a.e. agent the strategy is the same.
\end{assumption}

Finally, since the law of large numbers can not be directly invoked with a continuum of random variables, we  make the following technical assumption.

\begin{assumption}[Exact aggregation]
	\label{assump:exactaggregation}
	For each $t$ and each possible private history $h\in \mathcal{H}_t$ 
	\[
	\lambda\big(\{i\in N:\ h_t(i)=h\}\big)
	\quad\text{is }\text{ almost surely constant}.
	\]
\end{assumption}
The almost sure statement in the previous assumption are with respect to the probability measure governing states and availability of free tests. This assumption can be microfounded by extending the description of the model and using the exact law of large numbers for a continuum of agents in \citep{sun2006exact}. Motivated by parsimony, we directly assume this result.

\subsection{Result: Reduction to the one-agent case}
Let 
\[N_t(h):=\{i\in N:h_t(i)=h\}\]
be the set of agents with a \textit{private} history at the end of period $t$ given by $h\in\mathcal{H}_t$. Thus, $\{N_t(h)\}_{h\in\mathcal{H}_t}$ is a partition of $N$ according to \textit{private} histories. Additionally, let
\[\nu_t:=(\lambda(N_t(h)))_{h\in\mathcal{H}_t}\in\Delta(\mathcal{H}_t),\]
be the empirical distribution of \textit{private} histories at $t$.\footnote{ $\Delta(\mathcal{H}_t)$ denotes the simplex over $\mathcal{H}_t$.} Let $\nu_t(h)$ for $h\in\mathcal{H}_t$ be the typical entry of the vector $\nu_t$ and $\nu:=(\nu_1,\nu_2)$.

First, we establish that admissible mechanisms can only condition recommendations and assignments to a given agent on his own \textit{private} history and on $\nu$.
\begin{lemma}
	\label{lemma:equiv}
	If two possible \textit{collective} history profiles $(h_1,h_2)$ and $(h'_1,h_2')$ induce the same $\nu$, then there exist functions $s:{\O,0,1}\to\{0,1\}$ and $\chi:\{\O,0,1\}\to\{0,1\}$ such that for an admissible mechanism $m=(\sigma,\hat{x})$ and $\lambda$-a.e. $i\in N$:
	\begin{itemize}
		\item $\sigma_2(i;h_1)=\sigma_2(i;h'_1)=s(r_1(i))$, and
		\item $\hat{x}(i;h_2)=\hat{x}(i;h'_2)=\chi(r_1(i),r_2(i))$.
	\end{itemize} 
\end{lemma}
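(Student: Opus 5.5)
The plan is to combine the three admissibility requirements in \autoref{assump:admissibility}. Anonymity reduces each policy, \emph{within a fixed collective history}, to a function of the agent's own private history. Label-invariance together with null-insensitivity then shows that this function can only depend on the collective history through $\nu$.

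\textbf{Step 1 (anonymity).} Fix a collective history $h_1$. Anonymity says that $r_1(i)=r_1(j)$ implies $\sigma_2(i;h_1)=\sigma_2(j;h_1)$. Hence there is a function $s_{h_1}:\{\O,0,1\}\to\{0,1\}$ with $\sigma_2(i;h_1)=s_{h_1}(r_1(i))$ for every $i$. Likewise, for each $h_2$ there is $\chi_{h_2}:\{\O,0,1\}^2\to\{0,1\}$ with $\hat{x}(i;h_2)=\chi_{h_2}(r_1(i),r_2(i))$. The task therefore reduces to showing that $s_{h_1}=s_{h'_1}$ and $\chi_{h_2}=\chi_{h'_2}$ on every report (profile) that is held by a positive measure of agents, whenever $\nu=\nu'$.

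\textbf{Step 2 (constructing a relabeling).} Given $(h_1,h_2)$ and $(h'_1,h'_2)$ with the same $\nu$, I would build a measure-preserving bijection $\phi\in\Phi$ with $(\phi\cdot h_2)(i)=h'_2(i)$ for $\lambda$-a.e.\ $i$. Because $h_2$ determines $h_1$ coordinatewise, this also gives $\phi\cdot h_1=h'_1$ a.e.

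\begin{itemize}
\item The partitions $\{N_2(h)\}_{h\in\mathcal{H}_2}$ and $\{N'_2(h)\}_{h\in\mathcal{H}_2}$ of $N$ have cells of equal measure $\nu_2(h)$.
\item Each cell of positive measure, with Lebesgue measure restricted to it, is a nonatomic standard (Lebesgue) measure space. By the isomorphism theorem for such spaces, it is isomorphic mod null sets to $[0,\nu_2(h)]$, and hence to the matching cell $N'_2(h)$.
\item Gluing these isomorphisms across the finitely many cells gives $\phi$ off a null set.
\item The leftover null sets, including cells with $\nu_2(h)=0$, are uncountable or countable Borel sets. They can be matched by a Borel bijection after moving a null Cantor-type set between them if cardinalities differ, so that $\phi$ is a genuine bijection.
\end{itemize}

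This step is where I expect the main obstacle: producing an everywhere-defined measurable bijection preserving $\lambda$ rather than merely a mod-0 isomorphism.

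\textbf{Step 3 (concluding).} Label-invariance gives $\sigma_2(\phi(i);\phi\cdot h_1)=\sigma_2(i;h_1)$ for a.e.\ $i$. Null-insensitivity, since $\phi\cdot h_1$ and $h'_1$ differ only on a null set, lets me replace $\phi\cdot h_1$ by $h'_1$ for a.e.\ agent. Using Step 1 and the identity $r'_1(\phi(i))=r_1(i)$ a.e., this yields
\[
s_{h'_1}(r_1(i))=s_{h_1}(r_1(i))
\]
for a.e.\ $i$. Hence $s_{h_1}$ and $s_{h'_1}$ agree on every report held by a positive mass of agents. The reports held by a null set only affect a null set of agents, and the statement is a.e.\ in $i$. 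So we may take $s:=s_{h_1}$, modifying it on those reports if needed.

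The identical argument applied at $t=2$, with $\phi\cdot h_2=h'_2$ a.e., gives $\chi_{h_2}=\chi_{h'_2}$ on positive-mass report pairs, and we set $\chi:=\chi_{h_2}$. This establishes both bullets of \autoref{lemma:equiv}.
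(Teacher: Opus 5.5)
Your proposal is correct and follows essentially the same route as the paper: anonymity reduces each policy to a function of the agent's own reports; a relabeling $\phi\in\Phi$ matching the cells $N_t(h)$ and $N'_t(h)$ up to null sets, combined with label-invariance, shows that these functions depend on the collective history only through $\nu$; and null-insensitivity absorbs the remaining null-set discrepancies. Your Step 2 is more careful than the paper's footnote, which only sketches matching the cells to intervals mod null sets, and you correctly read the conclusion as the same $s,\chi$ evaluated at each profile's own reports, rather than the literal pointwise equality $\sigma_2(i;h_1)=\sigma_2(i;h'_1)$, which fails in general when agent $i$'s reports differ across the two profiles.
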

Note that the histories do not need to be the result of the equilibria in the game induced by a given mechanism, they only need to be possible. Althought the sets $N_t(h)$ are endogenous, the restrictions derived in the previous lemma are \textit{ex-ante} constraints on the admissible mechanisms class. They apply to every conceivable \textit{collective} history profile (and therefore do not depend on equilibrium strategies). 
\begin{proof}
	First, for any anonymous mechanism:
	\begin{itemize}		
		\item 	All agents in 	$N_1(r_1)$ receive the same testing recommendation $s_2(r_1)\in\{0,1\}$  in the second period. In other words, $\sigma_2(i;h_1)=s_2(h(i))$ for all $i\in N$.
		\item All  agents in $N_2(r_1,r_2)$ receive the same assignment $\chi(r_1,r_2)\in\{0,1\}$. In other words, $\hat{x}(i;h_2)=\chi(r_1(i),r_2(i))$ for all $i\in N$.
	\end{itemize}
	
	Note that condition (\ref{condi:aggregate}) requires
	\begin{align*}
		\sum_{r_1,r_2}\nu_2(r_1,r_2)\chi(r_1,r_2)\leq \lambda (S).
	\end{align*}
	
	\textit{A prima facie}, $s_2(r_1)$ and $\chi(r_1,r_2)$ can depend on the the entire history profile $(h_1,h_2)$. Anonymity only ensures that they are constant within each $N_t(h)$. For instance, they can depend on the	identity (i.e., the label or indexes) of the agents submitting those reports.\footnote{Consider the following testing recommendations for the second period: $s_1(r_1)=\mathbb{1}[sup_{i}N_t(r_1)\geq 0.4]$,  $s'_1(r_1)=\mathbb{1}[N_1(r_1)\subseteq[0,1/2]]$, and $s''(r_1)=\mathbb{1}[N_1(r_1)\text{ is convex}]$. The three recommendations satisfy anonymity but depend on the labels given to the agents.} We now proceed to rule-out that case. Formally, for any $\phi\in\Phi$ define
	\[N^\phi_t(h):=\{i\in N: h_t(\phi^{-1}(i))=h \}.\]
	Similarly, let $s^\phi_2(r_1)\in\{0,1\}$ be the testing recommendation in the second period given to agents in $N^\phi_1(r_1)$ and $\chi^\phi(r_1,r_2)\in\{0,1\}$ be the assignment given to agents in $N^\phi_2(r_1,r_2)$. By label-invariance,  we have that 
	\begin{itemize}
		\item $s_2(r_1)=s^\phi_2(r_1)$ for all $r_1$, and 
		\item $\chi(r_1,r_2)=\chi^\phi(r_1,r_2)$ for all $(r_1,r_2)$.
	\end{itemize}
	
	Moreover, consider any two \textit{collective} history profiles $(h_1,h_2)$ and $(h'_1,h'_2)$ inducing 	the same $\nu$.  Let $N'_t(h):=\{i\in N:h'_t(i)=h\}$. Then, there exists a $\phi\in\Phi$ such that $N'_t(h)=N_t^{\phi}(h)$.\footnote{Since $\mathcal{H}_t$ is finite, we can match each $N'_t(h)$ to an interval $\mathcal{I}'_t(h)$ and each $N_t(h)$  to an interval $\mathcal{I}_t(h)$. Define $\phi$ such those intervals are matched up to $\lambda$-null sets.} Thus, by the argument in the previous paragraph, label-invariance imply that $s_1$ and $\chi$ are the same for both  \textit{collective} history profiles $(h_1,h_2)$ and $(h'_1,h'_2)$. 
	
	Finally, we need to rule-out cases in which the behavior of a zero $\lambda$-measure subset of agents influences the recommendations and assignments of the mechanism.\footnote{For instance, consider the testing recommendation $s(r_1)=\mathbb{1}[N_1(r_1)\cap (0.2,0.4)\text{ is empty}]$. Note that this recommendation satisfy anonymity. Moreover, label-invariance does not rule it out either (since for any $\phi\in\Phi$ there will always be relabeled agents in the interval $(0.2,0.4)$).} Consider any two \textit{collective} histories $h_t$ and $h'_t$ such that for $\lambda$-.a.e. $i\in N$
	\[\lambda(\{i\in N:h_t(i)\neq h'_t(i)\})=0.\]
	In order to avoid ambiguities with the notation, let $N'_t(h):=\{i\in N:h'_t(i)=h\}$ and $\nu'_t$ be the corresponding empirical distribution of \textit{private} histories under $h'_t$. Since $N_t(h)$ and $N'_t(h)$ only differ on a set of $\lambda$-measure zero, it follows that $\nu_t=\nu'_t$. Furthermore, let 
	$s'_2(r_1)$ be the testing recommendation in the second period given to agents in $N'_1(r_1)$ and $\chi'(r_1,r_2)$ be the assignment given to agents in $N'(r_1,r_2)$.	By null-insensitivity, we have that 
	\begin{itemize}
		\item $s_2(r_1)=s'_2(r_1)$ for all $r_1$, and 
		\item$\chi(r_1,r_2)=\chi'(r_1,r_2)$  for all $(r_1,r_2)$.
	\end{itemize}

\end{proof}

Before stating that the principal's problem in many-agent setting can be reduced to the one-agent case, we introduce a final definition.

\begin{definition}[Individual mechanisms]
	A mechanism is said to be individual if for all $i\in N$: $\sigma_1(i)=\sigma^{ind}_1$, $\sigma_2(i;h_1)=\sigma^{ind}_2(r_1(i))$, and $\hat{x}(i;h_2)=\hat{x}^{ind}(r_1(i),r_2(i))$.
\end{definition}

\begin{proposition}
	Consider the case in which \autoref{assump:admissibility}, \autoref{assump:symmetric}, and \autoref{assump:exactaggregation} hold. Then, the outcomes generated by an admissible mechanism $m=(\sigma,\hat{x})$ can be replicated by 
	a collection $\{m(i)\}_{i\in N}$ of identical individual mechanisms $m(i)=(\sigma^{ind},\hat{x}^{ind})$ up to a subset of agents with $\lambda$-measure zero.
\end{proposition}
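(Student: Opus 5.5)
The approach is to combine \autoref{lemma:equiv} with \autoref{assump:exactaggregation}. Together they show that, on the equilibrium path, the mechanism treats each agent through fixed functions of his own reports only. Those functions then define the individual mechanism.

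First, fix an admissible mechanism $m=(\sigma,\hat{x})$ and the symmetric strategy profile played under it (\autoref{assump:symmetric}). Under this profile, \autoref{assump:exactaggregation} says that $\nu_1$ and $\nu_2$ equal deterministic vectors $\bar\nu_1,\bar\nu_2$ almost surely. Let $\Omega^\ast$ be this probability-one event.

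Second, apply \autoref{lemma:equiv} to every pair of collective histories realizable on $\Omega^\ast$. All such histories induce the same $\bar\nu=(\bar\nu_1,\bar\nu_2)$. So there are functions $s$ and $\chi$, depending only on $\bar\nu$ and not on the realization, such that $\sigma_2(i;h_1)=s(r_1(i))$ and $\hat{x}(i;h_2)=\chi(r_1(i),r_2(i))$ for $\lambda$-a.e.\ $i$.

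Third, define the individual mechanism by $\sigma^{ind}_1:=\sigma_1$ (constant by anonymity), $\sigma^{ind}_2:=s$ and $\hat{x}^{ind}:=\chi$. Then for $\lambda$-a.e.\ agent and almost every realization, recommendations and assignments coincide with those of $m$. Hence testing choices, reports, assignments and both parties' payoffs coincide as well, up to a null set of agents. Feasibility, i.e.\ condition (\ref{condi:aggregate}), is inherited: it holds for $m$ and depends only on $\sum\bar\nu_2(r_1,r_2)\chi(r_1,r_2)$, which is unchanged.

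The main obstacle is keeping each agent's best response unchanged. Under $m$, a single agent who deviates changes only a null set, so by null-insensitivity $\nu$ and the maps $s,\chi$ applied to everyone else are unaffected. The deviating agent himself also still faces $s(r_1')$ and $\chi(r_1',r_2')$ for his deviant reports. To show this, I would compare the deviation history with the equilibrium history. They differ on a null set, so null-insensitivity gives a.e.\ equality of the two treatments. But a.e.\ equality does not pin down a single agent's treatment.

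So I would argue at the level of a positive-measure coalition. Take the agents with a given private state-and-report path, replace them by the set of all agents who would deviate identically, and let the deviation set shrink in measure. More simply, I can read optimality under \autoref{assump:symmetric} as a.e.\ optimality against the continuation maps $s,\chi$, which is exactly the individual decision problem. Either way, the single-agent incentive constraints of \autoref{subsec:IC} apply verbatim, and strategies that are optimal under $m$ remain optimal under $m(i)$. Working out this step carefully is where I expect the most care to be needed.
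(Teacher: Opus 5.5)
Your Steps 1--3 are the paper's argument. \autoref{lemma:equiv} together with exact aggregation gives maps $s,\chi$ of the agent's own reports at the almost surely constant $\bar\nu$. The individual mechanism is then $(\sigma_1,s,\chi)$, and symmetric strategies make the construction uniform. Your feasibility remark is a correct addition that the paper omits.

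The incentive step you single out is not proved in the paper either. The paper only asserts that each agent ``can perfectly anticipate the consequences of each report'' and that a deviation payoff ``only depends on $\nu$ and his own history.'' So your worry is well placed, but neither fix you propose closes the gap.
The shrinking-coalition argument fails because admissible mechanisms may depend on $\nu$ discontinuously. The treatment after a positive-measure coalition deviation therefore need not approach $s,\chi$ evaluated at $\bar\nu$. Reading optimality as optimality against $s,\chi$ simply assumes that the deviator faces $s,\chi$, which is the point at issue.

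The missing idea is that anonymity is imposed for \emph{every} pair $i,j$ at \emph{every} collective history, not almost everywhere. Suppose agent $i_0$ deviates to a report $r_1'$ with $\bar\nu_1(r_1')>0$. The resulting $h_1'$ differs from the on-path $h_1$ only at $i_0$. Null-insensitivity then gives $\sigma_2(j;h_1')=\sigma_2(j;h_1)=s(r_1')$ for almost every $j$ in the positive-measure set $\{j\neq i_0:r_1(j)=r_1'\}$, hence for at least one such $j$. Anonymity at $h_1'$ then yields $\sigma_2(i_0;h_1')=s(r_1')$.

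The other agents' recommendations are unchanged almost everywhere, so $h_2'$ again differs from the on-path history only on a null set. The same reasoning gives $\hat{x}(i_0;h_2')=\chi(r_1',r_2')$ whenever $\bar\nu_2(r_1',r_2')>0$.

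That leaves report classes that are null under $\bar\nu$, where nothing pins down the deviator's treatment under $m$. There, set $\hat{x}^{ind}=0$, and $\sigma^{ind}_2=0$ after a null first-period report, as in \autoref{assump:forcing}. The deviator's continuation value at such histories is $0$ under $m(i)$ and at least $0$ under $m$. Hence no deviation is more attractive under $m(i)$ than under $m$, on-path behavior stays optimal, and on-path outcomes coincide.
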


\begin{proof}
	By \autoref{lemma:equiv},  an admissible mechanism can condition the testing recommendations and assignments to a given agent $i$ on: $i$'s own reports, and the empirical distribution of \textit{private} histories $\nu$.  By \autoref{assump:exactaggregation}, $\nu=(\nu_1,\nu_2)$ is almost surely constant. Moreover, since the states are independently distributed, any agent (having observed his own test results) cannot learn anything about the other agents' test results or update his own conjecture about the strategies employed by other agents. Thus, $i$'s actions can only affect which group $N_t(h)$ he will belong to.  Note that the testing recommendation and assignments for one group can  depend on the testing recommendation and the assignments, respectively, given to the other groups. However, fixing a mechanism and the strategies of his peers, each agent can perfectly anticipate the consequences of each report he submits. 
	
	We construct an \textit{individual} mechanism that replicates the outcomes (in terms of testing decisions, reports, and assignments) of a given admissible mechanism as follows. Using \autoref{lemma:equiv}, let $\sigma_1^{ind}=\sigma_1$, $\sigma^{ind}_2(r_1(i))=s_2(r_1)$, and $\hat{x}^{ind}(r_1(i),r_2(i))=\chi(r_1,r_2)$. By \autoref{assump:symmetric}, we can use the exact same construction for almost all agents. 
	
	Note that the consequences of each agent's decision is exactly the same under $m$ and under $\{m(i)\}_{i\in N}$ where $m(i)=(\sigma^{ind,\hat{x}^{ind}})$. Moreover, the expected payoff from a deviation only depends on $\nu$ and his own history.  Hence, the agents' choices and outcomes are the same under both mechanisms.  
\end{proof}

	\pagebreak

\section{Additional results in the one-agent model}
\label{appen:additional}

\subsection{Optimal mechanisms}

\begin{proposition}
	\label{prop:implementability}
	If $\kappa,\gamma\in(0, 1-\rho]$, then the optimal mechanism coincides with the baseline.
\end{proposition}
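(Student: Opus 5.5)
The approach is to show that the baseline mechanism of \autoref{prop:baseline}, made forcing as in \autoref{assump:forcing}, is itself incentive compatible when $\kappa,\gamma\in(0,1-\rho]$. Since the baseline solves $P$'s relaxed problem (no agency constraints), it is an upper bound for any IC mechanism; by \autoref{prop:revelation} optimality over all of $\mathcal{M}$ then follows.

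For $\kappa\le 1-\rho$, \autoref{prop:baseline} gives $\sigma_1^*=0$ and $\sigma_2^*(r_1)=1$ for every $r_1$. It is paired with the efficient assignments $\hat{x}^*(r_1,r_2)=\mathbb{1}[r_2=1]$ for $r_2\in\{0,1\}$ and, by forcing, $\hat{x}(r_1,\O)=0$. The constraints in $t=2$ are checked first. $[OB,\sigma_2(r_1)=1]$ reads $\gamma\le\E_{\mu_2}[\hat{x}(r_1,\tilde{r}_2)]=\mu_2(r_1)$. Here $\mu_2(r_1)\in\{\rho,1-\rho,\mu_2(\O)\}$, and $\min\{\rho,1-\rho,\mu_2(\O)\}=1-\rho$ because $\mu_2(\O)$ is a convex combination of $\rho$ and $1-\rho$. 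The constraint therefore holds for $\gamma\le 1-\rho$, and it continues to hold off path. After withholding $\omega_1$, the agent's private belief is again $\rho$ or $1-\rho$, and he still faces $\sigma_2=1$, so obedience in $t=2$ holds at every private belief; this is the same observation as the first bullet of \autoref{lem:testObedt2}. Full disclosure in $t=2$ is then automatic by \autoref{lem:FDfree}.

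Next come the $t=1$ constraints. Because $\sigma_2$ and $\hat{x}(r_1,\cdot)$ do not depend on $r_1$, the continuation value satisfies $v(r_1,\mu_2)=-c+\mu_2$ for every report $r_1$. Hence $[FD,\omega_1]$ holds with equality, and by tie-breaking the agent discloses. Likewise, $\E_{\mu_0}[v(\tilde{r}_1,\tilde{\mu}_2)]-\E_{\mu_0}[v(\O,\tilde{\mu}_2)]=0\le\gamma$, so $[OB,\sigma_1=0]$ holds; this is the bound of \autoref{lem:FBOBt1}. Thus the forcing baseline is IC and attains the relaxed optimum, which gives the claim.

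The main obstacle is the off-path consistency issue: one must make sure that no compound deviation (testing in $t=1$, withholding, then shirking in $t=2$) is profitable. The plan handles it through the result-independence of $v$. Whatever the agent learns in $t=1$, his $t=2$ problem is obeying at belief $\rho$ or $1-\rho$, and obedience holds at both. Testing at $t=1$ therefore only adds cost $c>0$ without changing the continuation value. A boundary case also needs a check: at $\kappa=1-\rho$, $P$ may be indifferent about testing after $r_1\in\{0,1\}$, but the mechanism with $\sigma_2\equiv 1$ remains optimal, so the conclusion is unaffected.
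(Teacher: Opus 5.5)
Your proposal is correct and follows the paper's own argument. In both, the baseline is made forcing (the assignment after a missing second-period report is set to zero), obedience at $t=2$ reduces to $\gamma\le\mu_2$ with every relevant belief at least $1-\rho$, and because continuation values do not depend on $r_1$, testing at $t=1$ is unprofitable. Your version also spells out two steps the paper leaves implicit: the upper-bound argument via the relaxed problem, and the tie-breaking that secures first-period disclosure.
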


\begin{proof}[Proof of \autoref{prop:implementability}]
	When $\kappa\leq 1-\rho$  the baseline recommends testing in the second period. After a report $r_1$ in the first period, $A$ is willing to follow such recommendation if and only if
	\[
	-c+\mu_2 \hat{x}(r_1,1)+(1-\mu_2)\hat{x}(r_1,0)\geq \pi [\mu_2 \hat{x}(r_1,1)+(1-\mu_2)\hat{x}(r_1,0)]+(1-\pi)\hat{x}(r_1,\O),
	\]
	equivalently,
	\[
	\gamma\leq \mu_2 \hat{x}(r_1,1)+(1-\mu_2)\hat{x}(r_1,0)-\hat{x}(r_1,\O),
	\]
	Note that setting $\hat{x}(r_1,\O)=0$ can only relax this constraint, and if $A$ follows the recommendation then $r_2=\O$ occurs with probability zero. Thus, this adjustment does not affect $P$'s expected payoff. Now, the efficiency of assignments is equivalent to setting $\hat{x}(r_1,r_2)=1$ if and only if $r_2=1$. Thus, $A$ is indifferent between reporting $r_2=0$ and $r_2=\O$ and strictly prefers to report $r_2=1$ rather than $r_2=\O$. With this in mind, the condition for $A$ to test in $t=2$ becomes $\gamma\leq \mu_2$. Since $\mu_2\geq 1-\rho$, it is clear that $A$ is willing to test in $t=2$ regardless of the history when $\gamma\leq 1-\rho$.
	Finally, note that neither the assignment nor the test recommendation in the following period depend on $r_1$, thus it is not optimal for the agent to test in $t=1$ whenever $\gamma> 0$.
\end{proof}

\begin{proposition}[Optimality of never testing]
	\label{prop:nevertesting}
	Assume $\kappa>\min\{\mu_2(\O),1-\mu_2(\O)\}$ and $\gamma\geq \rho$.
	Then, never testing is optimal and $\hat{x}(r_1,r_2)=\mathbb{1}[r_2=1]$.
\end{proposition}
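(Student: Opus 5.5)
The plan has two parts. First, show that the proposed mechanism, $\sigma\equiv 0$ together with $\hat{x}(r_1,r_2)=\mathbb{1}[r_2=1]$, is IC. Second, show that no IC mechanism gives $P$ a strictly higher value of $W$. By \autoref{prop:revelation} it is enough to compare against IC mechanisms, and the program at the start of \autoref{sec:optimalmech} lists every constraint that has to be checked.

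\emph{Step 1 (incentive compatibility).} Take any $r_1$. The constraints $[FD,r_1,\omega_2]$ hold because $\hat{x}(r_1,1)=1\geq 0=\hat{x}(r_1,\O)$ and $\hat{x}(r_1,0)=0=\hat{x}(r_1,\O)$. The constraint $[OB,\sigma_2(r_1)=0]$ reads $\gamma\geq \mu_2(r_1)$. Every reachable belief satisfies $\mu_2(r_1)\le\rho$, so this follows from $\gamma\geq\rho$. In $t=1$ the assignment ignores $r_1$, so $v(r_1,\mu_2)=\pi\mu_2$ whatever is reported. Hence $[FD,\omega_1]$ holds with equality, and the right-hand side of $[OB,\sigma_1=0]$ is $0\le\gamma$. \autoref{assump:forcing} holds vacuously because no test is ever requested.

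\emph{Step 2 (mechanisms with $\sigma_2(r_1)=0$).} For each $r_1$ with $\sigma_2(r_1)=0$, \autoref{lem:FDOBt2} and the two $[FD,r_1,\omega_2]$ constraints give $\hat{x}(r_1,\O)\le\min\{\hat{x}(r_1,0),\hat{x}(r_1,1)\}$. This leaves two relevant options. The first is to set $\hat{x}(r_1,\cdot)\equiv 1$ or $\equiv 0$, which earns $\max\{\mu_2,1-\mu_2\}$ at most. The second is to set $\hat{x}(r_1,\O)=0$ with $\hat{x}(r_1,r_2)=r_2$, which earns $\pi+(1-\pi)(1-\mu_2)$. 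When $\mu_2(r_1)<1/2$ the second option is simply efficient. When $\mu_2(r_1)\geq 1/2$, I will compare the two expressions directly. I will also check that the constant-$1$ continuation after $r_1=1$ (or after $r_1=\O$) violates $[OB,\sigma_1=0]$, or else is dominated. The violation argument uses the same computation as in the proof of \autoref{prop:HDopt}: raising $v(1)$ to $1$ makes $\E_{\mu_0}[v(\tilde r_1)]-\E_{\mu_0}[v(\O)]$ large. Whether $\gamma\geq\rho$ alone rules this out is where the argument may need care.

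\emph{Step 3 (mechanisms that request a test).} For any $r_1$ with $\sigma_2(r_1)=1$, testing yields $1-k$ at that history. I will compare this with the Step 2 value, using $\kappa>\min\{\mu_2(\O),1-\mu_2(\O)\}$. The inequalities from the proof of \autoref{prop:baseline} give $\kappa>1-\rho$, which handles $r_1\in\{0,1\}$, and the stated bound handles $r_1=\O$. For $\sigma_1=1$ I will argue as at the end of \autoref{prop:HDopt}: a first-period test only adds cost. The gains it could bring come through distortions after $r_1$, and no such distortion exists because every continuation in Step 1 is undistorted whenever $r_2\neq\O$.

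I expect the main obstacle to be Step 3 for $r_1=\O$ when $\mu_0>1/2$. There the benchmark no-test continuation is the distorted value $\pi+(1-\pi)(1-\mu_2(\O))$, not $\pi+(1-\pi)\mu_2(\O)$. Testing then gains $(1-\pi)\mu_2(\O)-k$, and the hypothesis only guarantees $\kappa>1-\mu_2(\O)$. My first attempt will be to show that the constant-$1$ continuation after $\O$ is IC under $\gamma\ge\rho$ and beats testing. If that fails, I will check whether the statement implicitly requires $\mu_0\le 1/2$, so that the binding threshold is $\mu_2(\O)$.
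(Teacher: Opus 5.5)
Your Step~1 is fine, but the proof cannot be completed: the statement is false on part of the stated region. The break comes in your Step~2 comparison for $\mu_2(r_1)\ge 1/2$, together with the hope that $[OB,\sigma_1=0]$ eliminates constant continuations. Take $\sigma\equiv 0$ and $\hat{x}\equiv 1$. Every report yields $1$, so the full-disclosure constraints hold with equality and the tie-break applies. Also $v(r_1,\mu_2)=1$ for every $r_1$, so the right-hand side of $[OB,\sigma_1=0]$ is $0$, and forcing is vacuous. This mechanism is therefore IC and gives $W=\mu_2(\O)$, while the claimed mechanism gives $\pi+(1-\pi)(1-\mu_2(\O))$. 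The former is strictly larger iff $\mu_2(\O)>1/(2-\pi)$. For example, $\rho=0.9$, $\pi=1/2$, $\mu_0=0.9$, $\kappa=1/2$, $\gamma=1$ satisfy both hypotheses (here $\min\{\mu_2(\O),1-\mu_2(\O)\}=0.18$), yet $0.82>0.59$.

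Restricting to $\mu_0\le 1/2$, as you contemplate, does not rescue the claim. With $\rho=0.9$, $\pi=0.2$, $\mu_0=0.4$, replace only the continuation after $r_1=1$ by $\hat{x}(1,\cdot)\equiv 1$. Every constraint still holds (the right-hand side of $[OB,\sigma_1=0]$ is $0.328<\rho$), and $W$ rises from $0.664$ to about $0.714$. The reason is that after $r_1=1$ the value $\rho$ beats $\pi+(1-\pi)(1-\rho)$ whenever $\rho>1/(2-\pi)$. The paper's own proof has the same hole. It compares only assignment rules that are uniform in $r_1$, and it discards $\hat{x}\equiv 1$ with ``$\mu_2(\O)<1$'', which is not a comparison with $\pi+(1-\pi)(1-\mu_2(\O))$.

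Your Step~3 also uses the wrong tool. The case you flag, testing after $r_1=\O$, is harmless. With forcing, $[OB,\sigma_2(\O)=1]$ needs $\E_{\mu_2(\O)}[\hat{x}(\O,\tilde r_2)]\ge\gamma\ge\rho>\max\{\mu_2(\O),1-\mu_2(\O)\}$. That forces $\hat{x}(\O,0)=\hat{x}(\O,1)=1$, so any IC test there is decision-irrelevant pure cost.

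Conversely, $\kappa>1-\rho$ does \emph{not} handle $r_1=1$. Measured against the distorted continuation $\pi+(1-\pi)(1-\rho)$, a test after $r_1=1$ gains exactly when $\kappa<\rho$. The only thing excluding it is $[OB,\sigma_2(1)=1]$, i.e.\ $\gamma\le\rho$. At the boundary $\gamma=\rho$, which the weak inequality allows, take $\sigma_2(r_1)=\mathbb{1}[r_1=1]$ with $\hat{x}=\mathbb{1}[r_2=1]$. This is IC: $[OB,\sigma_2(1)=1]$ and $[FD,\omega_1=1]$ bind, the right-hand side of $[OB,\sigma_1=0]$ is $0$, and the remaining constraints are as in your Step~1. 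It improves on the claimed mechanism by $\pi\mu_0(1-\pi)(\rho-\kappa)>0$ for any $\kappa<\rho$. The hypothesis permits this, since $\min\{\mu_2(\O),1-\mu_2(\O)\}\le 1/2<\rho$.

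What is true, and what your outline proves once repaired, is the statement under $\rho\le 1/(2-\pi)$ and $\gamma>\rho$. Then $[OB,\sigma_2(r_1)=1]$ forces every IC test to be followed by a constant assignment. So after each $r_1$ the continuation value is at most $\max\{\mu_2,1-\mu_2,\pi+(1-\pi)(1-\mu_2)\}=\pi+(1-\pi)(1-\mu_2)$, which the proposed mechanism attains. Because this bound is affine in $\mu_2$, a first-period test only adds $k$. Under these hypotheses the assumption on $\kappa$ is not needed at all.
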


\begin{proof}[Proof of \autoref{prop:nevertesting}]
	when $\kappa>\min\{\mu_2(\O),1-\mu_2(\O)\}$  the baseline mechanism recommends to never test. Clearly, the efficient assignments are not $IC$ when coupled with this testing policy. In particular, $A$ would strictly prefer to report $r_t=\O$. In order to satisfy $[FD,r_1,\omega_2=0]$  we must have $\hat{x}(r_1,0)\geq \hat{x}(r_1,\O)$ for all $r_1$.

Clearly, $\hat{x}(r_1,1)=0<\hat{x}(r_1,0)=1$ is never optimal. The following cases remain:

\begin{itemize}
	\item $\hat{x}(r_1,1)=1$ and $\hat{x}(r_1,0)= \hat{x}(r_1,\O)=0$ for all $r_1$.
	
	Then, $[OB,\sigma_2(r_1)=0]$ becomes 
	\begin{align*}
		\gamma\geq \mu_2[\hat{x}(r_1,1)-\hat{x}(r_1,0)]=\mu_2.
	\end{align*}
	
	Then $\gamma\geq \rho$ this is always satisfied, since $\mu_2\leq \rho$. Moreover, $v(0)=v(\O,1-\rho)=\pi(1-\rho)$. Thus, $[FD,\omega_1=0]$ is satisfied.
	
	In this case, 
	\[
	W=\pi+(1-\pi)[(1-\mu_0)\rho+\mu_0(1-\rho)]=\pi+(1-\pi)(1-\mu_2(\O))
	\]
	
	\item $\hat{x}(r_1,1)=\hat{x}(r_1,0)= \hat{x}(r_1,\O)=0$ for all $r_1$.
	
	This policy is $IC$ since the agent is indifferent between any report and testing is never optimal for him.
	In this case, 
	\[
	W=1-\mu_2(\O).
	\]
	This assignment policy is not optimal since $\mu_2(\O)<1$.

	\item $\hat{x}(r_1,1)=\hat{x}(r_1,0)= \hat{x}(r_1,\O)=1$ for all $r_1$.
	
	As in the previous case, this policy is $IC$ since the agent is indifferent between any report and testing is never optimal for him.
	In this case, 
	\[
	W=\mu_2(\O)
	\]
	Likewise, this assignment policy is not optimal since $\mu_2(\O)<1$.
\end{itemize}
\end{proof}

\subsection{Closing $P$'s lack of control channel}
\begin{lemma}[Implementability of the baseline when $\mathbf{\pi=0}$]
	\label{lem:imppi0}
	Assume $\kappa\in(1-\rho,\min\{\mu_2(\O),1-\mu_2(\O)\}]$, $\gamma\leq \mu_2(\O)$, and $\pi=0$. Then, the mechanism with $\hat{x}(r_1,r_2)=\mathbb{1}[r_1=\O]\mathbb{1}[r_2=1]$, $\sigma_1=0$, and $\sigma_2(r_1)=\mathbb{1}[r_1=\O]$ generates the same outcomes as the baseline mechanism, and therefore it is optimal.
\end{lemma}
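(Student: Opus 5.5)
The proof has two parts. First, show that when $\pi=0$ the proposed mechanism is incentive compatible and produces the baseline outcome. Then argue optimality: the baseline solves the relaxed problem with no agency frictions, so no mechanism can give $P$ more. Under \autoref{assump:interKappa}, \autoref{prop:baseline} (which holds for any $\pi\in[0,1)$, since $\kappa=k$ when $\pi=0$) says the baseline never tests at $t=1$ and tests at $t=2$ exactly when $r_1=\O$. With $\pi=0$ and $\sigma_1=0$, no free evidence ever arrives, so on path $r_1=\O$ with probability one. The baseline outcome is therefore: one test at $t=2$, followed by the assignment $\mathbb{1}[\omega_2=1]$. This is exactly what the proposed mechanism yields if $A$ obeys and discloses. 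In that case $P$'s payoff equals $1-k$, the unconstrained optimum, and optimality follows immediately.

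Incentive compatibility will be checked by backward induction. The key simplification is that with $\pi=0$ the agent observes a state only if he pays for a test.

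At $t=2$ after $r_1=\O$, the assignment rule gives $\hat{x}(\O,\O)=0$, so the mechanism is forcing. By \autoref{lem:FDfree}, full disclosure at $t=2$ then follows from obedience. Obedience requires $\gamma\le\mu_2$, where $\mu_2$ is $A$'s private belief at the start of $t=2$. That belief equals $\mu_2(\O)$ unless $A$ tested at $t=1$ and withheld the result; either way, $\gamma\le\mu_2(\O)$ covers the on-path case.

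After $r_1\in\{0,1\}$, reachable only by a deviation, the assignment is identically zero. Hence $A$ never wants to test at $t=2$ there, and is indifferent about disclosure.

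The main step is to rule out a first-period deviation. Suppose $A$ tests at $t=1$ at cost $c$. Any disclosed result ($r_1\in\{0,1\}$) yields $\hat{x}=0$, so he would conceal and report $\O$. The best continuation is then:
\begin{itemize}
\item if $\omega_1=1$, test again and get $-c+\rho$;
\item if $\omega_1=0$, get $\max\{0,-c+1-\rho\}$.
\end{itemize}
The deviation payoff is therefore
\[
-c+\mu_0(\rho-c)+(1-\mu_0)\max\{0,\,1-\rho-c\}.
\]
Obeying yields $-c+\mu_2(\O)=-c+\mu_0\rho+(1-\mu_0)(1-\rho)$. Comparing the two, the deviation payoff falls short by at least $c>0$, since $\max\{0,1-\rho-c\}\le 1-\rho$. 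So obedience to $\sigma_1=0$ holds strictly.

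The only delicate point is making sure the deviation analysis also covers disclosing at $t=1$ and then choosing whether to test at $t=2$. This is handled by the zero assignments after $r_1\neq\O$.
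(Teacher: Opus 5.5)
Your proof is correct and is essentially the paper's argument: forcing at $t=2$ via $\hat{x}(\O,\O)=0$, zero assignments after any $r_1\neq\O$, and weak indifference in second-period disclosure. It is more careful than the paper's: the paper simply asserts that every deviation is detected, but testing at $t=1$ and concealing the result goes undetected, and your explicit comparison of $-c+\mu_0(\rho-c)+(1-\mu_0)\max\{0,1-\rho-c\}$ with $-c+\mu_2(\O)$ is what actually rules that deviation out (you also make the first-best argument for optimality explicit).

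One correction: the shortfall is not at least $c$. It equals $\mu_0 c+(1-\mu_0)\bigl(1-\rho-\max\{0,1-\rho-c\}\bigr)$, which is strictly below $c$ whenever $c\in(1-\rho,\mu_2(\O)]$. The inequality $\max\{0,1-\rho-c\}\le 1-\rho$ that you cite gives a shortfall of at least $\mu_0 c>0$, and that is still enough for strict obedience to $\sigma_1=0$.
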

\begin{proof}
	 Clearly, $A$ follows the testing policy since any deviation is detected and punished with his least preferred assignment. Likewise, $A$ is, at worst, indifferent between disclosing the observed result in $t=2$ and $r_2=\O$.	 

\end{proof}

\subsection{Gaming the Baseline Under $\gamma \in (0,1-\rho]$ and Intermediate $\kappa$}
\label{subsec:lowgamma}
 Consider the following \textit{deviation} from the baseline recommendation.
 \begin{definition}[Deviation Strategy]
 		\label{def:deviation2}
 	$A$ tests in $t=1$, submit its result if and only if $\omega_1=1$. In that case, $A$ does not test in $t=2$. If a free test is obtained in $t=2$ report its result if and only if $\omega_2=1$. $A$ re-tests in $t=2$ if $\omega_1=0$, and submits any test result. 
 \end{definition}
 
 The following proposition establishes that when $A$’s effective testing costs fall below a given threshold, he strictly prefers to deviate from the baseline testing recommendations.
 \begin{assumption}[Low testing costs for $A$]
 	\label{assump:lowGamma}
 	\[\gamma\leq 1-\rho.\]
 \end{assumption}
\begin{proposition}
	\label{proposition:profdev}
	Consider the case in which \autoref{assump:interKappa} and \autoref{assump:lowGamma} hold. The  deviation in \autoref{def:deviation2} is strictly profitable for $A$ if and only if
	\begin{align*}
		\gamma< \bar{\gamma}':=\frac{\mu_0(1-\rho)}{1-\mu_0(1-\pi)}.
	\end{align*}
\end{proposition}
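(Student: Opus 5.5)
The plan is to compute the agent's expected payoff under two behaviors and compare them directly, following the template of the proof of \autoref{prop:gaming1}. The first behavior is the one in which $A$ obeys $\sigma_1^*=0$ and then best-responds in $t=2$ under the forcing baseline mechanism $m^*$. The second is the deviation in \autoref{def:deviation2}. The strict-profitability condition should then reduce to a linear inequality in $c$, which I will rewrite in terms of $\gamma=c/(1-\pi)$.

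\textbf{Step 1: behavior in $t=2$ under $m^*$.} I take this from \autoref{lem:testObedt2}. Since $\gamma\le 1-\rho$, $A$ tests in $t=2$ after $r_1=0$, because he does not follow $\sigma_2^*(0)=0$; his value there is $-c+(1-\rho)$. After $r_1=\O$ he follows $\sigma_2^*(\O)=1$ whatever his private belief, since $\gamma\le 1-\rho\le\mu_2$. After $r_1=1$ he does not test, because $\hat{x}^*(1,\O)=1$ already gives him his preferred outcome.

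An agent who privately knows $\omega_1=0$ therefore gets $-c+(1-\rho)$ whether he reports $0$ or $\O$. This means concealment in $t=1$ brings him no gain, and the continuation values are pinned down. I will also remark that under the deviation, submitting $r_1=\O$ after $\omega_1=0$ and then testing yields the same value as reporting $0$. So the deviation's payoff does not depend on that reporting detail.

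\textbf{Step 2: the agent's payoff when he obeys $\sigma_1^*=0$.} With probability $\pi$ a free result arrives. If $\omega_1=1$, he reports it and receives $1$. If $\omega_1=0$, he receives $-c+(1-\rho)$. With probability $1-\pi$ nothing arrives; he then tests in $t=2$ and receives $-c+\mu_2(\O)$, because $\hat{x}^*(\O,r_2)=\mathbb{1}[r_2=1]$ and $\hat{x}^*(\O,\O)=0$. Putting these together gives
\[U_B=\pi\big[\mu_0+(1-\mu_0)(1-\rho-c)\big]+(1-\pi)\big(\mu_2(\O)-c\big).\]

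\textbf{Step 3: the deviation's payoff and the comparison.} The deviation yields
\[U_D=-c+\mu_0+(1-\mu_0)(1-\rho-c).\]
The difference is
\[U_D-U_B=-c+(1-\pi)\big[\mu_0+(1-\mu_0)(1-\rho)-\mu_2(\O)+c\mu_0\big].\]
Using $\mu_2(\O)=\rho\mu_0+(1-\rho)(1-\mu_0)$, the bracket simplifies to $\mu_0(1-\rho)+c\mu_0$. Hence $U_D>U_B$ if and only if
\[c\big[1-(1-\pi)\mu_0\big]<(1-\pi)\mu_0(1-\rho).\]
Dividing by $(1-\pi)\big[1-\mu_0(1-\pi)\big]>0$ gives exactly $\gamma<\bar{\gamma}'$.

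\textbf{Main obstacle.} The delicate step is Step 1. I must make sure that $U_B$ is really the agent's best payoff among strategies that obey $\sigma_1^*=0$, so that the comparison is against the correct benchmark. That requires checking, via \autoref{lem:testObedt2} and the tie-breaking rule, that no concealment of $\omega_1$ and no skipped test in $t=2$ can raise his payoff above $U_B$.
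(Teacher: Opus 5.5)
Your proposal is correct and follows the same route as the paper's proof. You compute $A$'s payoff from obeying $\sigma_1^*=0$, with the best-response continuation implied by \autoref{lem:testObedt2}, and the payoff from the deviation, and both of your expressions coincide with the paper's; the difference is positive exactly when $c[1-\mu_0(1-\pi)]<(1-\pi)\mu_0(1-\rho)$, i.e.\ $\gamma<\bar{\gamma}'$. Your remark that, after a free $\omega_1=0$, reporting $0$ or the null report gives the same value $-c+(1-\rho)$ slightly sharpens the paper's claim that $A$ ``optimally'' conceals there, but it does not change the argument.
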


\begin{proof}
	If $A$ follows $\sigma^*_1=0$ he can expect a free test is $t=1$ with probability $\pi$. 
	
	If the free result is $\omega_1=1$ the agent reports $r_1=1$ and $r_1=\O$ otherwise. If $r_1=1$, the assignment is 1, regardless of the report in $t=2$ (since $\omega_2=0$ would be concealed). 
	
	If the free test result is $\omega_1=0$ then $A$ optimally sends $r_1=\O$ which leads to mandatory testing in $t=2$. Since $\gamma\leq 1-\rho$, $A$ is willing to follow $\sigma_2(\O)=1$ regardless of weather information was concealed. 
	
	If $r_1=\O$, $A$ can expected a payoff of $-c+\mu_2$, since $\hat{x}(\O,r_2)=1$ if and only if $r_2=1$ (remember, we are considering that a detected deviation, such as sending $r_2=\O$ when $\sigma_2(\O)=1$, are punished with the worst allocation). Thus, $A$ can expect a payoff of $-c(1-\pi\mu_0)+\pi[1-\rho(1-\mu_0)]+(1-\pi)\mu_2(\O)$ if he follows $\sigma^*_1=0$.
	
	If $A$ deviates to the strategy prescribed above, he incurs in testing costs of $c$ in $t=1$ and in $t=2$ if and only if $\omega_1=0$, which occurs with probability $1-\mu_0$. If $r_1=1$, $A$ is guaranteed an allocation of 1. If $r_1=\O$, equivalently $\omega_1=0$, then $A$ expects an allocation of $1-\rho$. Thus, the expected payoff for $A$ from this strategy is $-c(2-\mu_0)+1-\rho(1-\mu_0)$.
	
	Thus, $A$ (strictly) prefers to deviate from $\sigma^*_1=0$ if and only if $-c(2-\mu_0)+1-\rho(1-\mu_0)>-c(1-\pi\mu_0)+\pi[1-\rho(1-\mu_0)]+(1-\pi)\mu_2(\O)$. Re-arranging this expression and using the fact that $1-\mu_2(\O)=\mu_0(1-\rho)+(1-\mu_0)\rho$ we get the desired result.
\end{proof}

Note that $\bar{\gamma}'\leq1-\rho$ if and only if $\mu_0\leq1/(2-\pi)\in(1/2,1)$. Hence, when $\mu_0> 1/(2-\pi)\in(1/2,1)$, the deviation strictly benefits $A$ for all $\gamma\leq 1-\rho$.

\end{document}